\documentclass[review]{elsarticle}

\usepackage{hyperref}

\journal{Stochastic Processes and Their Applications}

%%%%%%%%%%%%%%%%%%%%%%%
%% Elsevier bibliography styles
%%%%%%%%%%%%%%%%%%%%%%%
%% To change the style, put a % in front of the second line of the current style and
%% remove the % from the second line of the style you would like to use.
%%%%%%%%%%%%%%%%%%%%%%%

%% Numbered
%\bibliographystyle{model1-num-names}

%% Numbered without titles
%\bibliographystyle{model1a-num-names}

%% Harvard
%\bibliographystyle{model2-names.bst}\biboptions{authoryear}

%% Vancouver numbered
%\usepackage{numcompress}\bibliographystyle{model3-num-names}

%% Vancouver name/year
%\usepackage{numcompress}\bibliographystyle{model4-names}\biboptions{authoryear}

%% APA style
%\bibliographystyle{model5-names}\biboptions{authoryear}

%% AMA style
%\usepackage{numcompress}\bibliographystyle{model6-num-names}

%% `Elsevier LaTeX' style
\bibliographystyle{elsarticle-num}

\usepackage[utf8]{inputenc} % allow utf-8 input
\usepackage[T1]{fontenc}    % use 8-bit T1 fonts
\usepackage{hyperref}       % hyperlinks
\usepackage{url}            % simple URL typesetting
\usepackage{booktabs}       % professional-quality tables
\usepackage{amsfonts}       % blackboard math symbols
\usepackage{nicefrac}       % compact symbols for 1/2, etc.
\usepackage{microtype}      % microtypography
\usepackage{xcolor} 
%%%%%%%%%%%%%%%%%%%%%%%

\usepackage{amsmath}
\usepackage{amsthm}
\usepackage{tikz}
\usepackage{caption,subcaption}
\usepackage{array}
\usepackage{mdwmath}
\usepackage{multirow}
\usepackage{mdwtab}
\usepackage{eqparbox}
\usepackage{multicol}
\usepackage{amsfonts}
\usepackage{tikz}
\usepackage{multirow,bigstrut,threeparttable}
\usepackage{amsthm}
\usepackage{array}
\usepackage{bbm}
\usepackage{epstopdf}
\usepackage{mdwmath}
\usepackage{mdwtab}
\usepackage{eqparbox}
\usepackage{tikz}
\usepackage{latexsym}
\usepackage{amssymb}
\usepackage{bm}
\usepackage{amssymb}
\usepackage{graphicx}
\usepackage{mathrsfs}
\usepackage{epsfig}
\usepackage{psfrag}
\usepackage{setspace}
\usepackage{hyperref}
\usepackage{url}
%\usepackage[%dvips,
            %CJKbookmarks=true,
            %bookmarksnumbered=true,
            %bookmarksopen=true,
%						bookmarks=false,
            %colorlinks=true,
            %citecolor=red,
            %linkcolor=blue,
            %anchorcolor=red,
            %urlcolor=blue
            %]{hyperref}
%\usepackage{algorithm}
\usepackage[ruled]{algorithm2e}
\usepackage{algpseudocode}
\usepackage{mathtools}
\usepackage{comment}
\usepackage{enumitem, kantlipsum}
\usepackage{mleftright}
\usepackage{floatrow}
%\usepackage[skip=2pt,font=footnotesize]{caption}

% Table float box with bottom caption, box width adjusted to content
\newfloatcommand{capbtabbox}{table}[][\FBwidth]

% provide arXiv number if available:
% \arxiv{cs.IT/1502.00326}

% put your definitions there:

%\newtheorem{remark}{Remark} \def\remref#1{Remark~\ref{#1}}
%\newtheorem{conjecture}{Conjecture} \def\remref#1{Remark~\ref{#1}}
%\newtheorem{example}{Example}

%\theorembodyfont{\itshape}
%\newtheorem{theorem}{Theorem}
%\newtheorem{proposition}{Proposition}
%\newtheorem{lemma}{Lemma} \def\lemref#1{Lemma~\ref{#1}}
%\newtheorem{corollary}{Corollary}

%\theorembodyfont{\rmfamily}
%\newtheorem{definition}{Definition}
%\numberwithin{equation}{section}
\theoremstyle{plain}
\newtheorem{theorem}{Theorem}

\newtheorem{lemma}{Lemma}

\newtheorem{corollary}{Corollary}

\newtheorem{assumption}{Assumption}

\def \var {\mathsf{Var}}

\def\EE{\mathbb{E}}

\def\PP{\mathbb{P}}

\def\RR{\mathbb{R}}

%%%%%%%%% bold face %%%%%%%%%%

\def\bX{\mathbf{X}}

\usepackage{xspace}

%% parenthesis

%\newcommand{\var}{\mathsf{var}}

\definecolor{myblue}{rgb}{.8, .8, 1}
\definecolor{mathblue}{rgb}{0.2472, 0.24, 0.6} % mathematica's Color[1, 1--3]
\definecolor{mathred}{rgb}{0.6, 0.24, 0.442893}
\definecolor{mathyellow}{rgb}{0.6, 0.547014, 0.24}

\newcommand{\calB}{{\mathcal{B}}}

\newcommand{\calF}{{\mathcal{F}}}

\newcommand{\calN}{{\mathcal{N}}}
\newcommand{\calO}{{\mathcal{O}}}

\newcommand{\calS}{{\mathcal{S}}}
\newcommand{\calT}{{\mathcal{T}}}

\newcommand{\calX}{{\mathcal{X}}}

\makeatletter
\newcommand*\rel@kern[1]{\kern#1\dimexpr\macc@kerna}
\newcommand*\widebar[1]{%
  \begingroup
  \def\mathaccent##1##2{%
    \rel@kern{0.8}%
    \overline{\rel@kern{-0.8}\macc@nucleus\rel@kern{0.2}}%
    \rel@kern{-0.2}%
  }%
  \macc@depth\@ne
  \let\math@bgroup\@empty \let\math@egroup\macc@set@skewchar
  \mathsurround\z@ \frozen@everymath{\mathgroup\macc@group\relax}%
  \macc@set@skewchar\relax
  \let\mathaccentV\macc@nested@a
  \macc@nested@a\relax111{#1}%
  \endgroup
}

\makeatother

\begin{document}

\begin{frontmatter}

\title{Unbiased Optimal Stopping via the MUSE}
%\tnotetext[mytitlenote]{Fully documented templates are available in the elsarticle package on \href{http://www.ctan.org/tex-archive/macros/latex/contrib/elsarticle}{CTAN}.}

%% Group authors per affiliation:
\author{Zhengqing Zhou \fnref{myfootnote}}
\address{Department of Mathematics, Stanford University}
\fntext[myfootnote]{Zhengqing Zhou and Guanyang Wang have contributed equally to this paper.}
\author{Guanyang Wang \fnref{myfootnote}}
\address{Department of Statistics, Rutgers University}
\author{Jose H. Blanchet, Peter W. Glynn}
\address{Management Science and Engineering, Stanford University}

%% or include affiliations in footnotes:
%\author[mymainaddress,mysecondaryaddress]{}
%\ead[url]{www.elsevier.com}

%\author[mysecondaryaddress]{Global Customer Service\corref{mycorrespondingauthor}}
%\cortext[mycorrespondingauthor]{Corresponding author}
%\ead{support@elsevier.com}

%\address[mymainaddress]{1600 John F Kennedy Boulevard, Philadelphia}
%\address[mysecondaryaddress]{360 Park Avenue South, New York}
\begin{abstract}
 We propose a new unbiased estimator for estimating the utility of the optimal stopping problem. The MUSE, short for ‘Multilevel Unbiased Stopping Estimator’, constructs the unbiased Multilevel Monte Carlo (MLMC) estimator at every stage of the optimal stopping problem in a backward recursive way. In contrast to traditional sequential methods, the MUSE can be implemented in parallel. We prove the MUSE has finite variance, finite computational complexity, and achieves $\varepsilon$-accuracy with $O(1/\varepsilon^2)$ computational cost under mild conditions. We demonstrate MUSE empirically in an option pricing problem involving a high-dimensional input and the use of many parallel processors.
\end{abstract}

\begin{keyword}
Multilevel Monte Carlo, unbiased estimator, optimal stopping, parallel computing
\MSC[2010] 62C05 \sep  60G40 \sep 62L15.
\end{keyword}

\end{frontmatter}

\section{Introduction}\label{sect.intro}
It is a pleasure to contribute to this special issue in honor of Prof. Larry Shepp, whose work has significantly impacted a wide range of scientific disciplines. This paper focuses on optimal stopping problems, an area of stochastic control in which Prof. Shepp contributed deeply both in terms of theory and applications. in particular in connection to mathematical finance problems; see, for example, \cite{Shepp96, Shepp97, Shepp93a, Shepp93b}. 

Our goal in this paper is on designing Monte Carlo methods for solving optimal stopping problems which can be easily parallelized because the estimators that we produce are unbiased, are applicable even in non-Markovian problems and have finite variance. We are not aware of any other Monte Carlo estimators for optimal stopping problems which share these properties. 

Monte Carlo methods are ubiquitously used for estimating high dimensional numerical integrals or statistics arising in every computation-related subject. However, vanilla Monte Carlo estimators may produce systematic bias in many practical applications (in particular those involving optimization). The presence of such systematic bias  precludes the direct use of parallel computing architectures. Consider the following toy example in a two-stage optimal stopping problem. Suppose one is able to simulate the two-stage process $(X_1,X_2)$ and is interested in estimating the utility
\begin{equation*}
    U := \EE \left[\max\lbrace f(X_1), \EE\left[f(X_2)\mid X_1\right] \rbrace\right],
\end{equation*}
where $f$ is some integrable reward function. Then it is not hard to show the vanilla Monte Carlo estimator $\widehat U$ will systematically overestimate $U$ by the Jensen's inequality.

To address the issue of bias, the design of unbiased Monte Carlo estimators has recently attracted much attention \cite{mcleish2011general, glynn2014exact, rhee2015unbiased, Blanchet2015UnbiasedMC,vihola2018unbiased, Blanchet2019Unbiased,biswas2019estimating,heng2019unbiased,jacob2020unbiased,middleton2020unbiased, wang2021maximal,wang2022unbiased} in operations research, statistics, and machine learning communities. Many existing debiasing techniques are closely  related to the Multilevel Monte Carlo (MLMC) framework developed by  Heinrich and Giles \cite{heinrich2001multilevel, giles2008multilevel,giles2015multilevel, giles2014antithetic, giles2019decision} where a sequence of biased but increasingly accurate estimators are used to estimate the functionals of stochastic processes  described by stochastic differential equations. Unbiased estimators have been successfully developed and employed in the context of stochastic approximation, Markov chain Monte Carlo estimation and convergence diagnosis, quantile estimation, and so on. In many settings, unbiased estimators provide a promising direction to efficient parallel implementation and better uncertainty quantification. 

In this paper, we study the discrete-time, finite-horizon optimal stopping problem, a fundamental problem  that can be found in areas including economics, operations research, and financial engineering. Consider the optimal stopping problem with underlying process $(X_1,\cdots, X_T)$ and reward function $f$.  We are interested in computing the expected utility of the optimal strategy:
\begin{equation}\label{def.optimal_stopping}
    U_T :=\sup_{\tau\in\calT_{T}} \EE \left[f\left(X_\tau\right)\right],
\end{equation}
where $\calT_{T}$ denotes the set of all the stopping times taking values in $\{1,\cdots, T\}$. Following the standard optimal stopping theory, we can define the Snell envelope by
\begin{equation*}
    U_{T-k} := \sup_{\tau\in\calT_{k+1, T}} \EE \left[f\left(X_{\tau}\right) \mid \calF_k\right],
\end{equation*}
where $\calT_{k+1, T}$ denotes the set of stopping times satisfying $k+1\leq \tau\leq T$, $k=0,\cdots, T-1$ and $\calF_k$ is the natural filtration at time $k$. The dynamical programming can be written as:
\begin{equation*}
\begin{cases}
U_1= \EE \left[f\left(X_T\right)\mid \calF_{T-1}\right], & \\
U_{T-k} = \EE \left[\max \left\lbrace f(X_{k+1}), U_{T-(k+1)}\right\rbrace\mid \calF_k\right], & \quad k = 0,\cdots, T-1. 
\end{cases}
\end{equation*}

    In most practical cases, $U_T$ cannot be analytically solved, and we therefore resort to Monte Carlo methods for an estimation. However, generating unbiased estimation for the utility of the optimal problem is known as a very difficult problem. Suppose one is able to simulate the whole process. Then the above dynamical programming backward recursion suggests a natural Monte Carlo estimator as follows. We sample tree-like paths of the whole process forward in time and estimate each $U_i$ backwards in time. The sampling procedure is illustrated in Fig \ref{fig.Tree} where one samples many $k$-ary trees with height $T$. After sampling enough paths, we aggregate the samples from bottom to top in each layer as estimators of $U_1,\cdots, U_T$ respectively. This estimator is relatively easy to implement but has several limitations. Firstly,  the estimator overestimates the utility   even in the simplest case  $T = 2$, let alone the general case. Secondly, as the estimation error propagates from one time horizon to another,  the accuracy  relies on a repeated $T$-limit, which is difficult to quantify. The above approach is the `high-estimator'  suggested in the seminal paper of  Broadie and Glasserman \cite{broadie1997pricing}. The authors also use the similar idea to construct the `low-estimator', and a confidence interval that covers the utility by combining the two estimators.  In fact, the above authors conjectured that there is no general unbiased estimators for the optimal stopping problem, see page 1326-1327 in \cite{broadie1997pricing} for details. There are also regression-based  Monte Carlo simulation methods, including the well-celebrated  Longstaff–Schwartz \cite{longstaff2001valuing} and  Tsitsiklis–Van Roy \cite{tsitsiklis2001regression} algorithms for option pricing, see also \cite{egloff2005monte} for extensions. Both methods approximate the solution of the original problem by solving a sequence of regression problems in linearly parameterized subspaces. Albeit convenient to use, the approximation error  and the unavoidable bias still cause concerns for both parallel implementation and uncertainty quantification. 

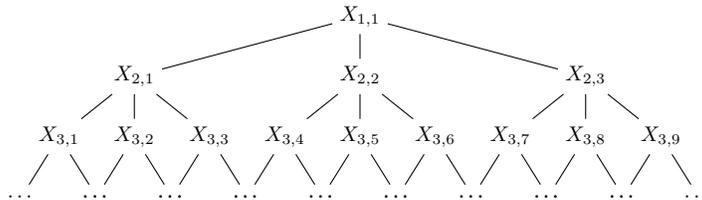
\begin{figure}[htbp]
    \centering
\begin{tikzpicture}[level distance= 0.8cm,
  level 1/.style={sibling distance=3cm},
  level 2/.style={sibling distance=1cm}
]
  \node [draw=none, scale = 0.8] {$X_{1,1}$}
    child {node [draw = none, scale = 0.8] {$X_{2,1}$}
      child {node [draw = none, scale = 0.8] {$X_{3,1}$}
      child {node [draw = none, scale = 0.8] {$\cdots$}}
      child {node [draw = none, scale = 0.8] {$\cdots$}}
      }
      child {node [draw = none, scale = 0.8] {$X_{3,2}$}
      child {node [draw = none, scale = 0.8] {$\cdots$}}
      child {node [draw = none, scale = 0.8] {$\cdots$}}}
      child {node [draw = none, scale = 0.8] {$X_{3,3}$}
      child {node [draw = none, scale = 0.8] {$\cdots$}}
      child {node [draw = none, scale = 0.8] {$\cdots$}}}
    }
    child {node[draw = none, scale = 0.8] {$X_{2,2}$}
    child {node [draw = none, scale = 0.8]{$X_{3,4}$}
    child {node [draw = none, scale = 0.8] {$\cdots$}}
    child {node [draw = none, scale = 0.8] {$\cdots$}}}
      child {node [draw = none, scale = 0.8] {$X_{3,5}$}
      child {node [draw = none, scale = 0.8] {$\cdots$}}
      child {node [draw = none, scale = 0.8] {$\cdots$}}}
      child {node [draw = none, scale = 0.8] {$X_{3,6}$}
      child {node [draw = none, scale = 0.8] {$\cdots$}}
      child {node [draw = none, scale = 0.8] {$\cdots$}}}
    }
    child {node[draw = none, scale = 0.8] {$X_{2,3}$}
    child {node [draw = none,scale = 0.8]{$X_{3,7}$}
    child {node [draw = none, scale = 0.8] {$\cdots$}}
    child {node [draw = none, scale = 0.8] {$\cdots$}}}
      child {node [draw = none, scale = 0.8] {$X_{3,8}$}
      child {node [draw = none, scale = 0.8] {$\cdots$}}
      child {node [draw = none, scale = 0.8] {$\cdots$}}}
      child {node [draw = none, scale = 0.8] {$X_{3,9}$}
      child {node [draw = none, scale = 0.8] {$\cdots$}}
      child {node [draw = none, scale = 0.8] {$\cdots$}}}
    };
\end{tikzpicture}
    \caption{Tree-like paths for Monte Carlo simulation. Here each node in the first two levels has three children which are $i.i.d.$ sampled from the conditional distribution.}
    \label{fig.Tree}
\end{figure}
In this paper, we introduce a novel unbiased estimator -- the Multilevel Unbiased Stopping Estimator (MUSE \footnote{In ancient Greek mythology, the Muses are the nine goddesses (the daughters of Zeus and Mnemosyne), who preside over literature, science, and the arts.}) for the optimal stopping problem described in \eqref{def.optimal_stopping}. Our estimator is inspired by the randomized Multilevel Monte Carlo estimator described in \cite{Blanchet2015UnbiasedMC, Blanchet2019Unbiased}. The MUSE is easy to implement and enjoys  both finite variance and finite expected computational complexity. The computational cost to achieve $\varepsilon$-accuracy is  $O(1/\varepsilon^2)$, which matches the optimal rate from the Central Limit Theorem (CLT). Our empirical studies suggest that the MUSE scales well with the dimensionality of the underlying process -- which is often viewed as a bottleneck of classical regression-based methods. As extra byproducts, we construct confidence intervals for the utility and  propose a natural algorithm to determine the optimal stopping time  based on the MUSE.

We emphasize that our techniques are of interest in randomized Multilevel Monte Carlo as we relax smoothness assumptions imposed in  \cite{Blanchet2015UnbiasedMC}. In summary, the MUSE can be viewed as a multi-stage extension of the MLMC estimator. In two-stage problems, our estimator (Algorithm \ref{alg.one_step_MLMC}) has the same expression as the randomized MLMC estimator. However, the general optimal stopping problem is defined in a recursive way, and thus the MLMC estimator fails to  directly apply. Multi-stage MUSE (Algorithm \ref{alg.multi_stage_MLMC}) generates data forwardly and then calls for the two-stage MUSE backwardly.  Moreover, MUSE relaxes the technical assumptions of the randomized MLMC estimator. In \cite{Blanchet2015UnbiasedMC}, the randomized MLMC estimator is proposed to estimate $g(m_\mu)$ where  $m_\mu$ is the mean of a probability measure $\mu$ and $g$ is a locally twice differentiable function  (Section $3$, Assumption $2$ in \cite{Blanchet2015UnbiasedMC}), but the function of our interest is a multiple composition of the $\max$ function, which is not  differentiable everywhere.

There is a large body of literature on solving the optimal stopping problem using the (non-randomized) MLMC methods. For the two-stage optimal stopping problem, a slightly more general version has been considered in \cite{giles2019decision}. The authors in \cite{giles2019decision} consider the problem of estimating:
\begin{align}\label{eqn: EVPI-EVPPI}
    \EE[\max_d f_d(X,Y)] - \EE[\max_d \EE[f_d(X,Y)\mid X]] 
\end{align}
where $\{f_d\}$ is a finite set of functions representing one's possible strategies. The quantity \eqref{eqn: EVPI-EVPPI} represents the expected value of the partial information. The first term in \eqref{eqn: EVPI-EVPPI} can be estimated via standard Monte Carlo approach, while the second term is much more complicated. Suppose $d = 2, f_1(X,Y) = f(X)$ and $f_2(X,Y) = \EE[f(Y)\mid X]$ for the utility function $f$ given at the beginning of our paper, then the second term in \eqref{eqn: EVPI-EVPPI} is the expected utility of the two-stage optimal stopping problem. The methodology and theory of \cite{giles2019decision} are closely connected to our paper, though the focus is somewhat different.  The main contribution of \cite{giles2019decision} is applying the (non-randomized) MLMC method to design low-bias 
estimators which attains $\epsilon^2$ mean-squared error with  $\calO(\epsilon^2 (\log\epsilon)^2)$ or $\calO(\epsilon^2)$ computational complexity under different assumptions. In contrast, our effort is mostly on designing completely unbiased estimators with finite variance and finite computational cost. Technically, Assumption 2 in \cite{giles2019decision} is very close to our Assumption \ref{ass.conditional} and is posed for similar reasons (see \ref{sect.assumption} for detailed discussions), while our paper has a relatively weaker moment assumption (Assumption \ref{ass.moment}) than Assumption 1 on \cite{giles2018mlmc}. Besides \cite{giles2019decision}, other related works for the two-stage optimal stopping problem include \cite{giles2018mlmc, hironaka2020multilevel, hong2009estimating}, and the references therein. 

Non-randomized MLMC methods have also been used in  the general multi-stage optimal stopping problems to obtain low-bias estimators.  In Belomestny, Ladkau, and Schoenmakers \cite{belomestny2015multilevel}, the authors designed MLMC  estimators to improve the computational complexity of the standard Monte Carlo estimator based on their previous work \cite{belomestny2012tight, belomestny2013multilevel}. The randomized MLMC idea has been briefly mentioned in the Ph.D. dissertation of Dickmann \cite{dickmann2014multilevel} but not explored in details.

The rest of this paper is organized as follows. After setting up the notation and describing the assumptions in Section \ref{sect.assumption}, in Section \ref{sect.main_results} we introduce the MUSE and prove its theoretical properties. In Section \ref{sect.numerical_experiments} we showcase several applications of the MUSE through numerical examples. In Section \ref{sect.conclusion} we conclude our paper with a discussion on its limitations and potential generalizations. Detailed proofs of Theorem \ref{thm.one_step_unbiased} and  \ref{thm.multi_stage_unbiased} are deferred to the Appendix.

\subsection{Notations and assumptions} \label{sect.assumption}
Let $(\Omega,\calB(\Omega))$ be a Polish space equipped with Borel $\sigma$-algebra $\calB(\Omega)$. Let $(\calX, \|\cdot\|)$ be a complete separable normed space equipped with Borel $\sigma$-algebra $\calB(\calX)$.  Given a fixed positive integer $T$ and an adapted $\calX-$ valued stochastic process $\lbrace X_i \rbrace_{i=1}^T$ with filtration $\{\calF_i\}_{i=1}^{T}$, we denote by $\pi_{1:T}$ the joint distribution of $(X_1, \cdots, X_T)$. Given $X_i = x_i$ for $1\leq i \leq k$  , we denote by $\pi_{k+1:T}(\cdot \mid \{x_i\}_{i=1}^k)$ the conditional distribution of $(X_{k+1}, \cdots, X_T)$, and $\pi_{{k+1}}(\cdot \mid \{x_i\}_{i=1}^k)$ the marginal conditional distribution of $X_{k+1}$.
We will use the convention $\{x_0\} = \varnothing$ and therefore $\pi_1$ denotes the (unconditioned) marginal distribution of $X_1$. Let $f:\calX \rightarrow \RR$ be an integrable reward function. We denote by $U_T$ the utility of the optimal stopping problem as described in (\ref{def.optimal_stopping}) and  the conditional utility
\begin{equation}\label{eqn.utility} 
    U_{T-k}(x_1,\cdots, x_k):= \sup_{\tau\in\calT_{k+1, T}} \EE_{\pi_{k+1:T}} \left[f\left(X_\tau\right) \mid \{x_i\}_{i=1}^k \right]
\end{equation}
which corresponds to the utility function if one starts the $T-k$-stage optimal stopping problem after observing the first $k$ outcomes $\{x_i\}_{i=1}^k$. For simplicity, we write $\pi_{k+1:T},\pi_{k+1}$, and $U_{T-k}$ when there is no confusion about the dependency on $\{x_i\}_{i=1}^k$. The geometric distribution taking values in $\{0,1, \cdots\}$ with parameter $r$ is denoted by $\text{Geo}(r)$.  Given a non-negative discrete random variable $N$, we denote its probability mass function by  $p(n):= \PP(N=n)$. In particular, $p_r(n)$ denotes $\PP(\text{Geo}(r)=n) = r(1-r)^{n}$. 
%A function $g:\calX \rightarrow\RR$ is $L-$Lipschitz if $|g(x) - g(y)| \leq L\|x-y\|$ for all $x, y \in \calX$. 

%\subsection{Assumptions}\label{sect.assumption}

Before formally describing the MUSE, we introduce the following assumption ensuring that the underlying process can be simulated.  This assumption is standard and can be found in almost every Monte Carlo-based optimal stopping algorithm, such as \cite{longstaff2001valuing, tsitsiklis2001regression}.
\begin{assumption}[Path simulation]\label{ass.simulation}
 Given  fixed integers $0\leq k\leq T-1, n \geq 1$ and a trajectory  $\{x_i\}_{i=1}^k$, we have a simulator $\cal S$ which takes $\{x_i\}_{i=1}^k$ as inputs and outputs $n$ $i.i.d.$ samples $X_{k+1}(1), \cdots X_{k+1}(n)$ with distribution $\pi_{k+1}(\cdot \mid\{x_i\}_{i=1}^k)$.  
\end{assumption}
Besides the  simulation assumption, we also introduce several technical assumptions. The MUSE can always be constructed as long as the simulation assumption is satisfied, but  several desired properties such as  finite variance are justified under these technical assumptions.

\begin{assumption}[Moment Assumption]\label{ass.moment}
There exists $\delta > 0$, such that $\EE \left[\|X_i\|^{2+\delta}\right] < \infty$ for all $1\leq i\leq T$. 
\end{assumption}

\begin{assumption}[Linear Growth]\label{ass.lipschitz_f}
$f:\calX\rightarrow \RR$ satisfies $|f(x)|\leq L \left(1 + \|x\|\right)$ for some $L > 0$.
\end{assumption}

\begin{assumption}[Regularity Condition on Conditional Expectation]\label{ass.conditional}
There exists a constant $C > 0$, such that 
\begin{equation*}
    \PP\left( \left| U_{T-k}(X_1,\cdots, X_k) - f\left(X_k\right)\right| < \varepsilon \right) < C\varepsilon
\end{equation*}
hold for all $\varepsilon > 0$ and $1\leq k\leq T-1$. 
\end{assumption}
Assumption \ref{ass.conditional} essentially requires the  density of $U_{T-k}(X_1,\cdots, X_k) - f(X_k)$ to be bounded at zero. This can be verified directly when $(X_1,\cdots, X_T)$ is an independent process and each $X_i$ has a bounded density. In general, we expect Assumption \ref{ass.conditional} to hold if the random vector $(X_1,\cdots, X_T)$ has a density, and the reward function $f$ is smooth except for finitely many points.

To be specific, suppose $(X_1, \cdots, X_k)$ follows the joint distribution $\pi_{1:k}$, then in Assumption 4 we require
\begin{equation*}
\int \mathbbm{1} \left( \left| U_{T-k}(x_1,\cdots, x_k) - f\left(x_k\right)\right| < \varepsilon \right) \pi_{1:k}(dx_1,\cdots, dx_k)\leq C\varepsilon.
\end{equation*}
Note that it is common to assume the underlying information to enjoy certain nice regularity in high-dimensional optimal stopping. For instance, similar regularity assumptions can be found in \cite{belomestny2015multilevel} (Proposition 3.1 and Theorem 3.4 (iv)), and \cite{giles2019decision} (Assumption 2).

We would like to emphasize that the main objective of our technical assumptions here is to bound the variance and expected computational cost simultaneously, which are crucial properties for the efficiency of any unbiased MLMC estimator. The finite variance  enables the Central Limit Theorem, which in turn gives  an asymptotically exact confidence interval of the true utility. The computational cost result allows us to control the expected amount of time for simulating one estimator. Putting the two properties together shows our estimator  achieves $\varepsilon^2$-expected mean squared error within $O(1/\varepsilon^2)$ expected computational cost, see Corollary \ref{cor.application_of_unbiasedness} for details. We also refer the readers to \cite{Blanchet2015UnbiasedMC} and \cite{vihola2018unbiased} for more discussions on the variance, the computational cost and their trade-off for unbiased MLMC methods.

\section{Multilevel Unbiased Stopping Estimator (MUSE)}\label{sect.main_results}
We present our main results in this section. We start with the two-stage MUSE in Section \ref{sect.two-stage muse}. The general/multi-stage MUSE is described in Section \ref{sect.multi-stage MUSE} and it is constructed by recursively calling the two-stage MUSE. Two related applications, including the construction of the confidence interval and an algorithm for finding the optimal stopping time, are discussed in Section \ref{sect.applications}.

\subsection{MUSE for two-stage optimal stopping problems}\label{sect.two-stage muse}
Two-stage optimal stopping is a special and simplest non-trivial case among the finite-horizon optimal stopping problems. To build a better intuition for the MUSE, we start with describing the MUSE under this simplified setting, which serves as both a base case and  motivation for the general estimator. 

Given the bivariate distribution $\pi_{1:2}$ of $(X_1, X_2)$. The utility of the two-stage optimal stopping problem can be written as: 
\begin{equation}\label{eqn.two stage utility}
    U_2 = \EE \left[\max\lbrace f(X_1), \EE\left[f(X_2) \mid X_1\right] \rbrace\right] = \int_{\Omega} \max\{f(x_1), \EE\left[f(X_2)\mid x_1\right] \} \pi_1(d x_1).
\end{equation}
Therefore, after sampling $x_1 \sim \pi_1$ from the simulator $\cal S$, it suffices to construct an unbiased estimator of  $g_{x_1}(\EE\left[f(X_2) \mid x_1\right])$ where $g_{x_1}(a):= \max\{f(x_1), a\}$. It is also clear that a vanilla estimator $\max\{f(x_1), f(x_2)\}$ with $(x_1,x_2)\sim \pi_{1:2}$ will be biased here, as such an estimator has expectation
$$\int_{\Omega \times \Omega} \max\{f(x_1), f(x_2)\}\pi_1(dx_1) \pi_{2}(dx_2\mid x_1) \geq \int_{\Omega} \max\{f(x_1), \EE\left[f(X_2)\mid x_1\right] \} \pi_1(d x_1), 
$$
and therefore overestimates the utility. The debiasing strategy follows from the observation of \cite{Blanchet2015UnbiasedMC, giles2008multilevel}. Let $\{n_0, n_1, \cdots \}$ be an increasing sequence of positive integers. For each $n_i$, let $\widebar{f(X_2)}_{n_i}$ be the empirical average of $n_i$ $i.i.d.$ samples of $f(X_2)$ with $X_2 \sim \pi_{2}(\cdot \mid \{x_1\})$. By virtue of the law of large numbers, we have $g_{x_1}\left(\EE\left[f(X_2) \mid x_1\right]\right)= \lim\limits_{k\rightarrow \infty} g_{x_1}(\widebar{f(X_2)}_{n_k})$ almost surely. Then we can write $g_{x_1}(\EE\left[f(X_2) \mid x_1\right])$  as the following telescoping summation:
\begin{align}\label{eqn.telescoping}
    \nonumber g_{x_1}(\EE\left[f(X_2) \mid x_1\right])&= \lim\limits_{k\rightarrow \infty} g_{x_1}\left(\widebar{f(X_2)}_{n_k}\right)\\
           &=  g_{x_1}\left(\widebar{f(X_2)}_{n_0}\right) + \sum_{k=1}^\infty  g_{x_1}\left(\widebar{f(X_2)}_{n_k}\right) -  g_{x_1}\left(\widebar{f(X_2)}_{n_{k-1}}\right).
\end{align}
If one can construct estimator $\Delta_n$ with expectation 
\begin{align}\label{eqn:delta_n}
    \EE \left[ g_{x_1}\left(\widebar{f(X_2)}_{n_k}\right) -  g_{x_1}\left(\widebar{f(X_2)}_{n_{k-1}}\right)\right] 
\end{align}
for $k\geq 1$ and  $\EE\left[\Delta_0\right] =g_{x_1}\left(\widebar{f(X_2)}_{n_0}\right)$, a randomized estimator of $g_{x_1}(\EE\left[f(X_2) \mid x_1\right])$ can be constructed as $\Delta_N/p_N$, where $N$ is a non-negative integer-valued random variable with probability mass function $\PP[N = n] = p_n$. The following heuristic calculation explains why one would expect $\Delta_N / p_N$ to be unbiased:
\begin{align*}
    \EE\left[\frac{\Delta_N}{p_N}\right] 
   & = \EE\left[\EE\left[\frac{\Delta_N}{p_N} \bigg|~ N\right]\right]
    = \EE\left[\sum_{k=0}^\infty \frac{\Delta_k}{p_k} \cdot p_k\right] 
    = \sum_{k=0}^\infty \EE\left[\Delta_k\right] \\
    & = \lim_{k\rightarrow \infty} \EE\left[ g_{x_1}\left(\widebar{f(X_2)}_{n_k}\right)\right]
    =  g_{x_1}(\EE\left[f(X_2) \mid x_1\right])
\end{align*} 
where the third equality interchanges the order between  expectation and  (infinite) summation, the last equality interchanges the order between limit and expectation.

The above is the core idea of the unbiased MLMC estimator in \cite{rhee2015unbiased, Blanchet2015UnbiasedMC, Blanchet2019Unbiased}. However, it remains to justify several theoretical issues, such as the validity of the above interchange and the estimator's variance. An extra subtlety is the tradeoff between the sampling complexity and the variance. The expected sampling complexity for generating one estimator $\Delta_N/p_N$ is of the order of $\sum_{k=0}^\infty p_k n_k$. Clearly, it is desirable that the estimator has both finite variance and finite expected sampling complexity. 

Rhee and Glynn \cite{rhee2015unbiased} show the estimator $\Delta_N/p_N$ is unbiased and of finite variance if $\sum_{k=0}^\infty \EE\left[ \Delta_k^2\right]/p_k < \infty$ in a more general context. If one is  interested in estimating quantities of the form $g(\EE\left[X\right])$, Blanchet and Glynn
\cite{Blanchet2015UnbiasedMC} show one can choose $n_k = 2^k$ and $N\sim \text{Geo}(1 - 2^{-3/2})$ provided that $X$ has  bounded $6$-th order moment and $g$ is locally twice differentiable and grows moderately.  However, the assumption of \cite{Blanchet2015UnbiasedMC} is not satisfied even in this simple two-stage case, as the function $g_{x_1}(a) = \max\{f(x_1), a\}$ here is non-differentiable at $f(x_1)$.  The absence of smoothness assumptions on the function $g_{x_1}$ causes technical challenges  and calls for better theoretical guarantees in analyzing the unbiased MLMC estimator. 

Now we are ready to describe the two-stage MUSE and discuss its theoretical properties. Algorithm \ref{alg.one_step_MLMC} is referred to as the two-stage MUSE in contrast to the general/multi-stage MUSE described later. Roughly speaking, one first samples $x_1\sim \pi_1$, then constructs the standard unbiased MLMC estimator $\Delta_N/p_N$ for $g_{x_1}(\EE\left[f(X_2)\mid x_1\right])$ using a geometric random variable $N$ and $2^N$ $i.i.d.$ samples of $X_2$ with distribution $\pi_{2}(x_2\mid \{x_1\})$. The estimator $\Delta_n$ described in Step 4 is crucial for theoretical analysis. It is often referred to as the `antithetical difference' estimator in the literature \cite{Blanchet2015UnbiasedMC,Blanchet2019Unbiased}. The intuition is that the antithetic construction reduces the variance . As elaborated later in the theoretical analysis, the estimator $\Delta_N$ equals $0$ if both $\frac{S_{2^{N-1}}^E}{2^{N-1}}$ and $\frac{S_{2^{N-1}}^O}{2^{N-1}}$ are on the same side of $f(X_1(1))$. This observation turns out to be the key for controlling the expected computational complexity and variance simultaneously. We want to emphasize that the main contribution of the two-stage MUSE is more theoretical rather than the methodological. Algorithmically, the two-stage MUSE is very similar to the unbiased MLMC estimator. Theoretically, two-stage MUSE is the first unbiased estimator with theoretical guarantees for dealing with non-smooth functions.

\begin{algorithm}[h]
\SetAlgoLined
\textbf{Input}: A simulator $\mathcal S$ of a two-stage process $(X_1, X_2)$, parameter $r\in(1/2, 1)$. \\
\textbf{Output}: An unbiased estimator of $\EE \left[\max \{ f(X_1), \EE [f(X_2) \mid X_1]\}\right]$.\\
\textbf{Step 1}. Sample $N$ from geometric distribution \text{Geo}$(r)$.\\
\textbf{Step 2}. Sample $X_1(1)$. Conditioning on $X_1(1)$, sample $2^{N}$ $i.i.d.$  $X_2\left(1\right), \cdots,  X_2\left(2^N\right)$. \\
\textbf{Step 3}. Calculate the following three quantities:
\begin{align*}
        &S_{2^N} = f\left(X_2(1)\right) + \cdots + f\left(X_2\left(2^N\right)\right), &\\
        &S_{2^{N-1}}^O = f\left(X_2(1)\right) + f\left(X_2(3)\right) + \cdots + f\left(X_2\left(2^N-1\right)\right), & \text{(sum over odd indices)}\\
        &S_{2^{N-1}}^E = f\left(X_2(2)\right) + f\left(X_2(4)\right) + \cdots + f\left(X_2(2^N)\right). & \text{(sum over even indices)}
\end{align*}\\
\textbf{Step 4}. Calculate (note that $\Delta_0 := \max\left\lbrace f\left(X_1(1)\right), f\left(X_2(1)\right)\right\rbrace$)
\begin{align*}
        \Delta_N = &\max\left\lbrace f\left(X_1(1)\right), \frac{S_{2^N}}{2^N}\right\rbrace \\  &-\frac{1}{2}\left[\max\left\lbrace f\left(X_1(1)\right), \frac{S_{2^{N-1}}^O}{2^{N-1}}\right\rbrace + \max\left\lbrace f\left(X_1(1)\right), \frac{S_{2^{N-1}}^E}{2^{N-1}}\right\rbrace\right].
    \end{align*}\\
\textbf{Return}: $Y := \Delta_N / p_r(N)$.

\caption{Two-stage Multilevel Unbiased Stopping Estimator (Two-stage MUSE)}\label{alg.one_step_MLMC}
\end{algorithm}
Our main theoretical results on the two-stage MUSE are described in Theorem \ref{thm.one_step_unbiased}. Notice that the computational cost for Algorithm \ref{alg.one_step_MLMC} is a random variable depending on $N$. If we define the computation time for sampling one random variable and performing one arithmetic operation as `one unit', then the expected computational complexity is of the order of $\EE\left[2^N\right] = \sum\limits_{n=1}^\infty 2^n p_r(n)$.

\begin{theorem}\label{thm.one_step_unbiased}
Consider a two-stage process $(X_1, X_2)$. Suppose Assumptions  \ref{ass.simulation},  \ref{ass.moment} \emph{(}with $\delta < 1/4$\emph{)} and  \ref{ass.lipschitz_f} hold, and suppose Assumption \ref{ass.conditional} is satisfied with $T = 2$, i.e., 
\begin{equation}\label{aspn.bounded_density}
    \PP\left(\left| \EE[f(X_2) \mid X_1] - f(X_1) \right|\leq \varepsilon \right) < C\varepsilon
\end{equation}
 for all $\varepsilon >0$. Let $r = 1 - 2^{-\frac{2+9\delta/(80+40\delta)}{2+\delta/10}}\in (1/2, 1)$ in Algorithm \ref{alg.one_step_MLMC}. Then, the resulting estimator $Y$ in Algorithm \ref{alg.one_step_MLMC} has the following properties:
\begin{enumerate}[leftmargin = 0.6cm]
    \item [\emph{(1)}] $\EE [Y] = \EE \left[\max\left\lbrace f(X_1), \EE \left[f(X_2) \mid X_1\right]\right\rbrace\right] $.
    \item [\emph{(2)}] The expected computational complexity of $Y$ is finite.
    \item [\emph{(3)}] $\EE \left[|Y|^{2+\frac{\delta}{10}}\right]\leq \widetilde{C}\cdot L^{2+\delta}\left[1 + \EE\left[\|X_2\|^{2+\delta}\right]\right]$, where $\widetilde{C}$ is a constant independent of $(X_1, X_2)$.
\end{enumerate}
\end{theorem}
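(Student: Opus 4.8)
The plan is to route all three conclusions through a single conditional estimate for the antithetic increment $\Delta_n$. Fix $X_1(1)=x_1$, write $g_{x_1}(a)=\max\{f(x_1),a\}$ and $m=\EE[f(X_2)\mid x_1]$, and let $\mu^O=S^O_{2^{n-1}}/2^{n-1}$ and $\mu^E=S^E_{2^{n-1}}/2^{n-1}$ be the two half-sample averages, so that $S_{2^n}/2^n=(\mu^O+\mu^E)/2$ and $\Delta_n=g_{x_1}((\mu^O+\mu^E)/2)-\tfrac12(g_{x_1}(\mu^O)+g_{x_1}(\mu^E))$. The engine of the whole proof is a bound $\EE[|\Delta_n|^{q}\mid x_1]\le C_n$ with $q=2+\delta/10$ and $C_n$ decaying geometrically fast enough that $\sum_n C_n\, p_r(n)^{-(q-1)}<\infty$. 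Once this is in hand, (1)--(3) follow by largely standard manipulations, so I would isolate and prove that estimate first.

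For parts (1) and (2): for unbiasedness, note that $\mu^O$ and $\mu^E$ are each distributed as the full $2^{n-1}$-sample average $\overline{f(X_2)}_{2^{n-1}}$, whence $\EE[\Delta_n\mid x_1]=\EE[g_{x_1}(\overline{f(X_2)}_{2^n})]-\EE[g_{x_1}(\overline{f(X_2)}_{2^{n-1}})]$ and the series telescopes. Because $\EE[|Y|\mid x_1]=\sum_n\EE[|\Delta_n|\mid x_1]\le\sum_n C_n^{1/q}<\infty$, Fubini justifies interchanging expectation and summation, the telescoped partial sums equal $\EE[g_{x_1}(\overline{f(X_2)}_{2^M})]$, the strong law gives $\overline{f(X_2)}_{2^M}\to m$ a.s., continuity of $g_{x_1}$ together with the uniform integrability furnished by the moment bound upgrades this to convergence in mean, and integrating the limit $g_{x_1}(m)=\max\{f(x_1),m\}$ over $x_1\sim\pi_1$ returns $U_2$. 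For the complexity, the expected cost is $\EE[2^N]=\sum_n 2^n p_r(n)=r/(2r-1)$, finite exactly when $r>1/2$; a direct check shows the prescribed $r$ exceeds $1/2$ precisely when $\delta<1/4$, which is the sole role of that hypothesis.

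For part (3), the crux, I would first record the pointwise geometry: $g_{x_1}$ is convex and $1$-Lipschitz with a single kink at $f(x_1)$, so $\Delta_n$ vanishes whenever $\mu^O,\mu^E$ fall on the same side of $f(x_1)$, while in general $|\Delta_n|\le\tfrac12\min(|\mu^O-f(x_1)|,|\mu^E-f(x_1)|)\le\tfrac12|\mu^O-\mu^E|\,\mathbf 1\{\text{straddle}\}$, the straddle event being that $\mu^O,\mu^E$ separate $f(x_1)$. Writing $\mu^O-\mu^E$ as a normalized sum of $2^{n-1}$ i.i.d. centered increments and applying a Rosenthal / Marcinkiewicz--Zygmund inequality with the $(2+\delta)$ moments guaranteed by Assumptions \ref{ass.moment} and \ref{ass.lipschitz_f} yields $\EE[|\mu^O-\mu^E|^{2+\delta}\mid x_1]\lesssim 2^{-(n-1)(2+\delta)/2}$ with a constant of the form $L^{2+\delta}(1+\EE[\|X_2\|^{2+\delta}])$---exactly the right-hand side sought in (3). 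A Hölder split of $\EE[|\mu^O-\mu^E|^{q}\mathbf 1\{\text{straddle}\}\mid x_1]$ then factors a concentration term $\lesssim 2^{-(n-1)q/2}$ from the straddle probability, which, since straddling forces $|\mu^O-m|\ge d:=|f(x_1)-m|$, is controlled by a Chebyshev/Markov bound decaying in both $2^{n-1}$ and $d^{-1}$. A variance-only accounting suggests the clean rate $\EE[|\Delta_n|^q]\lesssim 2^{-n\beta}$ with $\beta=1+\delta/2$, and since $\beta>q-1$ the reweighted series $\sum_n C_n p_r(n)^{-(q-1)}\asymp\sum_n 2^{-n(\beta-(q-1)\theta)}$ converges for any $\theta=\log_2\tfrac{1}{1-r}$ in the nonempty window $(1,\beta/(q-1))$, the stated $r$ being one admissible choice; summing the geometric series produces the constant $\widetilde C$ in (3).

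The genuinely delicate step---and the one that replaces the smoothness hypothesis of \cite{Blanchet2015UnbiasedMC}---is integrating the conditional bound over $x_1$ across the kink $\{d\approx 0\}$. Here Assumption \ref{ass.conditional}, i.e.\ $\PP(d<\varepsilon)<C\varepsilon$, says $d$ has a bounded density at the origin, so $\EE[d^{-\gamma}]<\infty$ for every $\gamma<1$; splitting at $d=2^{-(n-1)/2}$ and using the crude bound on the low-mass near-kink region and the straddle-probability bound off it is what converts the pointwise estimate into the geometric decay. The true subtlety is that the per-level concentration constant is itself a conditional moment of $f(X_2)$ that may be \emph{correlated} with the event $\{d\text{ small}\}$, so one must budget the finite $(2+\delta)$ moments simultaneously against three demands---the size of $\mu^O-\mu^E$, the straddle probability, and the near-kink density---via a further Hölder inequality in $x_1$. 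I expect this three-way moment accounting to be the main obstacle, and it is precisely what forces the usable exponent down from $2+\delta$ to $2+\delta/10$, leaving just enough of an integrability margin (available because $\delta<1/4$, now in a role distinct from its role in part (2)) to close the argument while keeping the constant of the advertised product form.
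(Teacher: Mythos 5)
Your proposal follows essentially the same route as the paper's proof: the telescoping/Fubini argument for unbiasedness, the observation that the antithetic increment vanishes unless the two half-sample averages straddle the kink at $f(x_1)$, Marcinkiewicz--Zygmund concentration combined via H\"older with the near-kink probability bound from Assumption \ref{ass.conditional}, and the geometric-series accounting that pins down the admissible range of $r$ (with $\delta<1/4$ ensuring the window is nonempty). Your handling of the near-kink region via negative moments of $d=\left|f(X_1)-\EE[f(X_2)\mid X_1]\right|$ is a cosmetic variant of the paper's event decomposition $E_1\cup E_2\cup E_3$ with $\varepsilon=2^{-n/4}$, and your identification of the three-way moment budget as the reason the usable exponent drops to $2+\delta/10$ matches the paper's H\"older split with exponents $(2+\delta)/(2+\delta/10)$ and $(20+10\delta)/(9\delta)$.
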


The proof of Theorem \ref{thm.one_step_unbiased} is deferred to the Appendix. As shown in Theorem \ref{thm.one_step_unbiased}, the two-stage MUSE is unbiased, has both finite $(2+\frac{\delta}{10})$-th moment (thus finite variance) and finite expected computational complexity. We also want to highlight a seemingly small theoretical improvement that turns out to be crucial in designing the multi-stage MUSE. In the existing literature, such as \cite{Blanchet2015UnbiasedMC,Blanchet2019Unbiased}, the estimator is guaranteed to have a finite second moment given the original random variable has a higher (say $6$-th) moment. In our case, we prove the estimator has $(2+\frac{\delta}{10})$-th moment given the original random variable has $(2+\delta)$-th moment, which makes the whole algorithm iterable in the multi-stage case. 

\subsection{MUSE for general optimal stopping problems}\label{sect.multi-stage MUSE}

In this section, we propose the multi-stage MUSE algorithm (Algorithm \ref{alg.multi_stage_MLMC}) which aims to provide an unbiased estimator for the general optimal stopping problem \eqref{def.optimal_stopping}. The multi-stage MUSE, as described in Algorithm \ref{alg.multi_stage_MLMC}, can be viewed as a recursive extension of the two-stage MUSE. To get an unbiased estimator of $U_T$, one  feeds 
$(0;\varnothing;\calS, r_1,\cdots, r_{T-1})$ into Algorithm \ref{alg.multi_stage_MLMC}. After sampling $x_1$ from the unconditioned distribution and $N_1\sim \text{Geo}(r_1)$, it suffices to construct $2^{N_1}$ unbiased estimators of $U_{T-1}(x_1)$ to build the MLMC estimator. Meanwhile, an unbiased estimator of $U_{T-1}(x_1)$ can be viewed as another optimal stopping problem with horizon $T-1$ and underlying process $\pi_{2:T}$, and therefore we call the same algorithm recursively after adding $x_1$ into the trajectory history. 

\begin{algorithm}[!h]
\SetAlgoLined
\textbf{Input}: Time index $k$. Trajectory history $H = \{x_1, \cdots, x_k\}$ or $\varnothing$. A simulator $\calS$ of the  conditional distribution $\pi_{T-k}$, parameters $r_{k+1}, \cdots, r_{T-1} \in (1/2, 1)$. \\
\textbf{Output}: An unbiased estimator of $U_{T-k}$ in \eqref{eqn.utility}.\\

\If{ $k = T-1$ } {Sample one $x_T$ from the conditional distribution of $\pi_T$ given $H$.\\ 
\textbf{Return $Y := f(x_T)$}. }
\Else{
Sample $x_{k+1}$ from the condition distribution $\pi_{k+1}$ given $H$.\\
Add $x_{k+1}$ to the trajectory history $H$.\\
Sample $N_{k+1}\sim \text{Geo}(r_{k+1})$.\\
Call Algorithm \ref{alg.multi_stage_MLMC} for $2^{N_{k+1}}$ times with inputs $(H; \calS, r_{k+2} \cdots, r_{T-1})$, label the outputs by $Y_{k+1}(1), \cdots, Y_{k+1}(2^{N_{k+1}})$.\\
Calculate the following three quantities:
\begin{align*}
        &S_{2^{N_{k+1}}} = Y_{k+1}(1) + \cdots + Y_{k+1}(2^{N_{k+1}}),\\
        &S_{2^{N_{k+1} - 1}}^O = Y_{k+1}(1) + Y_{k+1}(3) \cdots + Y_{k+1}(2^{N_{k+1}} - 1),\\
        &S_{2^{N_{k+1} - 1}}^E = Y_{k+1}(2) + Y_{k+1}(4) \cdots + Y_{k+1}(2^{N_{k+1}} ). 
\end{align*}\\
 Calculate (note that $\Delta_0 := \max\left\lbrace f(x_{k+1}), Y_{k+2}(1)\right\rbrace$)
\begin{align*} 
        \Delta_{N_{k+1}} = &\max\left\lbrace f(x_{k+1}), \frac{S_{2^{N_{k+1}}}}{2^{N_{k+1}}}\right\rbrace \\
        &- \frac{1}{2}\left[\max\left\lbrace f(x_{k+1}), \frac{S_{2^{N_{k+1}-1}}^O}{2^{N_{k+1}-1} }\right\rbrace + \max\left\lbrace f(x_{k+1}), \frac{S_{2^{N_{k+1}-1}}^E}{2^{N_{k+1}-1} }\right\rbrace\right].
    \end{align*}\\
\textbf{Return}: $Y := \Delta_{N_{k+1}} / p_{r_{k+1}}\left(N_{k+1}\right)$. 
}
\caption{Multi-stage Multilevel Unbiased Stopping Estimator (Multi-stage MUSE)}\label{alg.multi_stage_MLMC}
\end{algorithm}

The next theorem studies the theoretical properties of the multi-stage MUSE. The computational complexity of Algorithm \ref{alg.multi_stage_MLMC} comes from the sampling complexity, which is of the order of $\EE\left[\prod_{k=1}^{T-1}2^{N_k}\right]$.

\begin{theorem}\label{thm.multi_stage_unbiased}
With Assumptions \ref{ass.simulation}, \ref{ass.moment}, \ref{ass.lipschitz_f}, and  \ref{ass.conditional}, consider the input $$(0;\varnothing;\calS, r_1,\cdots, r_{T-1})$$ in Algorithm \ref{alg.multi_stage_MLMC}, where 
\begin{equation*}
r_i = 1 - 2^{-\frac{2+9\left(\delta\cdot10^{i+1-T}\right)/\left(80+40\left(\delta\cdot10^{i+1-T}\right)\right)}{2+ \delta\cdot10^{i-T}}}\in (1/2, 1)
\end{equation*}
for $1\leq i \leq T-1$. Then, the resulting estimator $Y$ in Algorithm \ref{alg.multi_stage_MLMC} has the following properties:
\begin{enumerate}[leftmargin = 0.6cm]
    \item [\emph{(1)}] $\EE [Y] = U_T$.
    \item [\emph{(2)}] The expected computational complexity is $O\left(10^{T^2}\right)$. 
    \item [\emph{(3)}] $\var\left(Y\right) = O\left(10^{T^2}\right)$. 
\end{enumerate}
\end{theorem}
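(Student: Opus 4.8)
The plan is to induct on the number of remaining stages $s=T-k$, using the two-stage MUSE (Theorem~\ref{thm.one_step_unbiased}) as the inductive engine at every recursion level. The key observation is that one call of Algorithm~\ref{alg.multi_stage_MLMC} at time index $k\le T-2$ is \emph{exactly} a two-stage MUSE in which the inner i.i.d.\ samples $f(X_2(1)),\dots,f(X_2(2^N))$ of Algorithm~\ref{alg.one_step_MLMC} are replaced by the recursive outputs $Y_{k+1}(1),\dots,Y_{k+1}(2^{N_{k+1}})$ (conditionally i.i.d.\ given $\calF_{k+1}$, since they share the single draw $x_{k+1}$), and the threshold $f(X_1(1))$ is replaced by $f(x_{k+1})$. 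I would therefore take as induction hypothesis at time index $k$: (i) $\EE[Y\mid\calF_k]=U_{T-k}(x_1,\dots,x_k)$; (ii) finite expected computational complexity; and (iii) a finite $(2+\delta_k)$-th moment with $\delta_k:=\delta\cdot10^{k+1-T}$, whose constant is controlled by $L$ and the moments of the $\{X_i\}$. The base case $s=1$ (time index $T-1$) is immediate: the output $f(x_T)$ equals $U_1=\EE[f(X_T)\mid\calF_{T-1}]$ in conditional expectation, costs one unit, and has finite $(2+\delta)$-th moment by Assumptions~\ref{ass.moment} and \ref{ass.lipschitz_f}, so $\delta_{T-1}=\delta$.

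For part (1) I would use the tower property. Conditioning on $\calF_{k+1}$, hypothesis (i) gives $\EE[Y_{k+1}(j)\mid\calF_{k+1}]=U_{T-(k+1)}$, so the strong law yields $2^{-n}\sum_{j=1}^{2^n}Y_{k+1}(j)\to U_{T-(k+1)}$ a.s., and the telescoping/randomization identity of Section~\ref{sect.two-stage muse} applies verbatim with $g_{x_{k+1}}(a)=\max\{f(x_{k+1}),a\}$, giving $\EE[Y\mid\calF_{k+1}]=\max\{f(x_{k+1}),U_{T-(k+1)}\}$. Averaging over $x_{k+1}\sim\pi_{k+1}(\cdot\mid\calF_k)$ and applying the dynamic-programming recursion collapses the right-hand side to $U_{T-k}(x_1,\dots,x_k)$, closing the induction and yielding $\EE[Y]=U_T$ at $k=0$. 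The only non-formal point is the interchange of limit, infinite sum, and expectation in the randomization identity; I would justify it, as in Rhee--Glynn, from the summability $\sum_n\EE[\Delta_n^2]/p_{r_{k+1}}(n)<\infty$ produced by the moment analysis, so that unbiasedness and finite variance are established together.

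The crux --- and the step I expect to be the main obstacle --- is the moment bookkeeping (iii) together with its degradation by a factor of $10$ per level. Theorem~\ref{thm.one_step_unbiased} is stated for inner variables of the special form $f(X_2)$, so I would first upgrade it to a version allowing arbitrary i.i.d.\ building blocks: if the $Y_{k+1}(j)$ have a finite $(2+\delta_{k+1})$-th conditional moment and the threshold $f(x_{k+1})$ obeys linear growth, then the resulting two-stage estimator has a finite $(2+\delta_{k+1}/10)$-th moment, with a constant trackable in terms of the building-block moment and $\EE[\|X_{k+1}\|^{2+\delta}]$. This is where Assumption~\ref{ass.conditional}, read at index $k+1$, enters: it bounds the probability of the event on which $\Delta_n\neq 0$ (the two half-averages straddling $f(X_{k+1})$), which by concentration forces $f(X_{k+1})$ to lie within the $O(2^{-n/2})$ fluctuation scale of $U_{T-(k+1)}$, producing the extra decay in $\EE[\Delta_n^2]$ needed for summability. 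Putting $\delta_{\mathrm{in}}=\delta_{k+1}=10\,\delta_k$ reproduces exactly the exponents defining $r_i$ (numerator parameter $\delta\cdot10^{i+1-T}$, denominator parameter $\delta\cdot10^{i-T}$); and since $\delta_0=\delta/10^{\,T-1}>0$, the final estimator retains a finite second moment. Establishing this generalized two-stage moment estimate for non-$f(X_2)$ building blocks, with an explicitly trackable constant, is the delicate part, and is precisely the $(2+\delta)\to(2+\delta/10)$ improvement highlighted after Theorem~\ref{thm.one_step_unbiased}.

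Finally, for parts (2) and (3) I would make the per-level blow-up explicit. Since $N_j\sim\text{Geo}(r_j)$ with $r_j\in(1/2,1)$, we have $\EE[2^{N_j}]=r_j/(2r_j-1)$; expanding the calibrated exponent $\alpha_j$ in $r_j=1-2^{-\alpha_j}$ to first order gives $\alpha_j=1+\Theta(\delta\cdot10^{j-T})$, hence $2r_j-1=1-2^{-(\alpha_j-1)}=\Theta(\delta\cdot10^{j-T})$ and $\EE[2^{N_j}]=O(10^{\,T-j})$. Because the level geometrics across distinct subtrees are independent, the expected work equals $\EE[\prod_{k=1}^{T-1}2^{N_k}]=\prod_{k=1}^{T-1}\EE[2^{N_k}]$, whose exponent is $\sum_{j=1}^{T-1}(T-j)=T(T-1)/2=O(T^2)$, giving the claimed $O(10^{T^2})$ complexity. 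For the variance I would propagate the second-moment identity $\EE[(\Delta_N/p_{r}(N))^2]=\sum_n\EE[\Delta_n^2]/p_{r}(n)$ up the recursion, checking that each per-level factor is again $O(10^{\,T-j})$ via the summability estimate of the previous paragraph; multiplying across the $T-1$ levels bounds the second moment, and hence the variance, by $O(10^{T^2})$ as well.
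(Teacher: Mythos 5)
Your proposal is correct and follows essentially the same route as the paper: a backward induction carrying exactly the three hypotheses (conditional unbiasedness, finite expected work, and a $(2+\delta_k)$-th moment with $\delta_k=\delta\cdot 10^{k+1-T}$), re-running the two-stage analysis at each level with the recursive outputs $Y_{k+1}(j)$ in place of $f(X_2)$ and Assumption~\ref{ass.conditional} at index $k+1$ controlling the event $\Delta_n\neq 0$. Your per-level accounting $\EE[2^{N_j}]=r_j/(2r_j-1)=O(10^{T-j})$ and the exponent sum $\sum_{j}(T-j)=T(T-1)/2$ reproduce the paper's constants $C_i,\widetilde{C}_i\lesssim 10^{T-i}$ and the $O(10^{T^2})$ bounds.
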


To illustrate the iterative structure of multi-stage MUSE, we sketch the proof of Theorem \ref{thm.multi_stage_unbiased} in below. The detailed proof is deferred to the Appendix. By the standard dynamical programming for optimal stopping, we have
\begin{equation*}
\begin{cases}
U_1(X_{1:T-1})= \EE \left[f(X_T) \mid X_{1:T-1}\right],  \\
U_{T-k}(X_{1:k}) = \EE \left[\max \left\lbrace f\left(X_{k+1}\right), U_{T-(k+1)}(X_{1:k+1})\right\rbrace \mid X_{1:k}\right], \quad 0\leq k\leq T-2.
\end{cases}
\end{equation*}
Here, for a generic $d$-tuple $ (v_1,\cdots, v_d)$, let $v_{i:j}:=(v_i,\cdots, v_j)$ for $1\leq i\leq j\leq d$ for notational convenience. By applying the techniques in the proof of Theorem \ref{thm.one_step_unbiased}, one can show that for each stage, the output $Y_{T-k}$ always has a moment of order greater than $2$, and is  unbiased for $U_{T-k}$. The proof of the moment bounds is iterable because of the careful technical analysis in Theorem \ref{thm.one_step_unbiased}. Moreover, by a proper choice of the parameters $r_1,\cdots, r_{T-1}$, the expected sampling complexity for each stage is bounded. As a result, the total expected sampling complexity is also bounded.

\begin{corollary}\label{cor.application_of_unbiasedness}
Let Assumption \ref{ass.moment} and Assumption \ref{ass.conditional} hold. For any $\varepsilon >0$, and a fixed time horizon $T$, we can construct an estimator $Y$ that satisfies the following properties:
\begin{itemize}
    \item The expected computational complexity for constructing $Y$ is $O(1/\varepsilon^2)$.
    \item The expected mean squared error between $Y$ and the true utility is bounded by $\epsilon^2$, i.e., $\EE \left[\left( Y - U_T\right)^2\right]\leq\varepsilon^2.$
\end{itemize}
 
\end{corollary}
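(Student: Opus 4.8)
The plan is to build $Y$ by averaging independent copies of the single-run multi-stage MUSE furnished by Theorem \ref{thm.multi_stage_unbiased}. Concretely, let $Y^{(1)}, Y^{(2)}, \ldots$ be \iid outputs of Algorithm \ref{alg.multi_stage_MLMC} run on the input $(0;\varnothing;\calS, r_1,\cdots, r_{T-1})$ with the parameters prescribed there, and set $\bar{Y}_n := \frac{1}{n}\sum_{j=1}^n Y^{(j)}$. By part (1) of Theorem \ref{thm.multi_stage_unbiased} each copy is unbiased for $U_T$, so $\EE[\bar{Y}_n] = U_T$ for every $n$; hence the mean squared error collapses to the variance, $\EE[(\bar{Y}_n - U_T)^2] = \var(\bar{Y}_n)$. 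This is the decomposition that makes the unbiasedness of the base estimator do all the work.

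First I would invoke parts (2) and (3) of Theorem \ref{thm.multi_stage_unbiased} to fix, for the given horizon $T$, finite constants $C_{\mathrm{var}}(T)$ and $C_{\mathrm{cost}}(T)$, both $O(10^{T^2})$, satisfying $\var(Y^{(1)}) \leq C_{\mathrm{var}}(T)$ and $\EE[\text{cost of one run}] \leq C_{\mathrm{cost}}(T)$. Since the copies are independent, $\var(\bar{Y}_n) = \var(Y^{(1)})/n \leq C_{\mathrm{var}}(T)/n$. Choosing $n := \lceil C_{\mathrm{var}}(T)/\varepsilon^2 \rceil$ then forces $\EE[(\bar{Y}_n - U_T)^2] \leq \varepsilon^2$, which is the second bullet.

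For the cost, the number of replications $n$ is deterministic, depending only on $\varepsilon$ and the fixed constant $C_{\mathrm{var}}(T)$, so by linearity the total expected computational complexity is $n \cdot \EE[\text{cost of one run}] \leq \lceil C_{\mathrm{var}}(T)/\varepsilon^2 \rceil \cdot C_{\mathrm{cost}}(T)$. Because $T$ is held fixed, the product $C_{\mathrm{var}}(T)\,C_{\mathrm{cost}}(T) = O(10^{2T^2})$ is an absolute constant, and the right-hand side is $O(1/\varepsilon^2)$, giving the first bullet.

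In truth the substantive content is already contained in Theorem \ref{thm.multi_stage_unbiased}: the delicate step is precisely the simultaneous control of variance and expected cost established there, and I do not expect any genuine obstacle beyond it. The remaining argument is the standard Monte Carlo averaging device; its only minor subtlety is to observe that, since $n$ is chosen nonrandomly in advance, the expected aggregate cost factorizes cleanly as $n$ times the single-run expected cost, so no Wald-type correlation between the random per-run cost and the random output needs to be accounted for.
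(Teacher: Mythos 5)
Your proposal is correct and follows essentially the same route as the paper's proof: average $n$ \iid copies of the multi-stage MUSE, use unbiasedness to reduce the mean squared error to $\var(Y^{(1)})/n$, choose $n$ of order $\var(Y^{(1)})/\varepsilon^2$, and multiply the deterministic $n$ by the finite expected per-run cost from Theorem \ref{thm.multi_stage_unbiased}. The only addition is your remark that no Wald-type argument is needed because $n$ is nonrandom, which the paper leaves implicit.
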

\begin{proof}[Proof of Corollary \ref{cor.application_of_unbiasedness}]
We fix a positive integer $n$. Calling Algorithm \ref{alg.multi_stage_MLMC} $n$ times yields $n$ $i.i.d.$ unbiased estimators $Y_1,\cdots, Y_n$ of $U_T$. Then, 
$$
    \EE \left[\left(\frac{1}{n}\sum_{i=1}^nY_i - U_T\right)^2\right] = \EE \left[\left(\frac{1}{n}\sum_{i=1}^n (Y_i - \EE[Y_i])\right)^2\right] = \frac{1}{n}\var(Y_1).
$$
Taking $n = \var(Y_1)/\varepsilon^2$ (note that $\var(Y_1) < \infty$ by Theorem \ref{thm.multi_stage_unbiased}) and define $Y:= \frac{1}{n}\sum_{i=1}^nY_i $. It follows from the above calculation that  $\EE \left[\left( Y - U_T\right)^2\right]\leq\varepsilon^2.$ Moreover, since sampling each $Y_i$ has expected computational complexity $O(1)$,  the expected computational complexity for constructing $Y$ is $O(1/\varepsilon^2)$, as desired.
\end{proof}

Finally we comment on some practical issues when implementing the MUSE for multi-stage optimal stopping problems. One drawback of our algorithm is that the computational complexity (the constant hidden in $\calO(1/\epsilon^2)$ in Corollary \ref{cor.application_of_unbiasedness}) grows exponentially with time horizon $T$. Therefore, our algorithm is prohibitively slow when $T$ becomes large. We believe this is expected due to the comprehensive multi-stage structure of the optimal stopping problem \eqref{def.optimal_stopping}.  In fact, the same phenomenon happens in the Monte Carlo-based methods, including the popular algorithms of  Broadie and Glasserman \cite{broadie1997pricing}, Longstaff and Schwartz \cite{longstaff2001valuing} and Tsitsiklis and Van Roy \cite{tsitsiklis2001regression}. It is known in Glasserman and Yu \cite{glasserman2004number} that the number of sample paths required for the regression coefficients to converge grows exponentially in the degree of basis functions under the worst-case scenario.  Zanger \cite{zanger2013quantitative} proved the expected $L^2$ error has an $O((\log^{1/2}N) N^{-1/2})$ convergence rate ($N$ is the number of sample paths) given the approximation architecture has a finite Vapnik–Chervonenkis (VC) dimension. Their error bound also scales exponentially with respect to the time horizon, see Theorem 3.3 of \cite{zanger2013quantitative}.  Meanwhile, we emphasize that besides Assumption \ref{ass.conditional},  there is currently no specific distributional assumption on the underlying process. Therefore, there is a potential for designing computationally efficient  estimators given additional distribution assumptions. Moreover, if the unbiased requirement can be relaxed, then it is possible to design fast algorithms while retaining the $\calO(1/\epsilon^2)$ complexity. One can potentially either choose a fixed level based on the standard MLMC approaches \cite{giles2015multilevel}, or use a truncated geometric random variable to replace the geometric random variable $N$ in Algorithm \ref{alg.multi_stage_MLMC} as described in the recent work \cite{asi2021stochastic}. These ideas open up exciting possibilities for new algorithms, but are already beyond the target of our paper.
 
\subsection{Confidence Interval and Optimal Stopping Time}\label{sect.applications}
A confidence interval (CI) is crucial if one is not merely interested in getting a point estimate, but also expects to assess the quality of such estimation. Fortunately, since many $i.i.d.$ estimators of $U_T$ can be constructed by repeatedly calling Algorithm \ref{alg.multi_stage_MLMC}, the   $1-\alpha$ confidence interval (CI) of the utility can be constructed as follows: Let $Y_1, \cdots, Y_n$ be $n$ unbiased estimators of $U_T$ generated by the MUSE. Let $\widebar{Y}$ be their empirical mean and $s$ the standard deviation. Then, two types of CIs  can be built via
 \begin{itemize}[leftmargin = *]
     \item (CLT) $\left[\widebar{Y} - z_{\alpha/2}\cdot s/\sqrt{n}, \widebar{Y} + z_{\alpha/2}\cdot s/\sqrt{n}\right]$, where $z_{\alpha/2}$ is the $(1-\alpha/2)$-th quantile of  $\calN(0,1)$.
     \item (Bootstrap \cite{efron1994introduction}) $[\widebar Y^\star_{\alpha/2}, \widebar Y^\star_{1-\alpha/2}]$, where $\widebar Y^\star_{\alpha/2}, \widebar Y^\star_{1-\alpha/2}$,  are the $\alpha/2$-th and $(1-\alpha/2)$-th  empirical quantile of the bootstrape averages.
 \end{itemize}
 
In principle, both methods are valid as the number of simulated estimators goes to infinity. The first CI is based on the Central Limit Theorem, and the convergence rate depends on the higher-order cumulants. The second CI uses the empirical distribution to approximate the actual underlying distribution, which is non-parametric and is (monotone) transformation-respecting (\cite{efron1994introduction}, Chapter 12). It is known that the percentile bootstrap may not work well when the data has a significant skewed distribution. In these cases one may consider alternative methods such as the $\text{BC}_\text{a}$ (bias corrected accelerated) bootstrap \cite{diciccio1996bootstrap}.
 
Besides estimating $U_T$, we are also interested in finding the optimal stopping time $\tau^*$ such that $\EE\left[f\left(X_{\tau^*}\right)\right] = U_T$. By standard dynamical programming, 
    $$\tau^* = \inf \left\lbrace k\geq 1: U_{T-k}(X_1,\cdots, X_k) \leq  f(X_k)\right\rbrace.$$
Though $U_{T-k}$ is not analytically available, the MUSE provides us with powerful tools for estimating $U_{T-k}$ at each round. The algorithm for the optimal stopping time is as follows:

\begin{algorithm}[H]
\SetAlgoLined
\textbf{Input}: Simulator of the process $(X_1,\cdots, X_T)$, tolerance level $\varepsilon$.\\
\textbf{Output}: A stopping time $\widehat{\tau}$.\\
Sample $X_1 = x_1$.\\
\For{$k\gets 1$ \KwTo $T-1$}{
Call Algorithm \ref{alg.multi_stage_MLMC} with history $(x_1,\cdots, x_k)$  $n$ times to get $i.i.d.$ unbiased estimators $Y_1,\cdots, Y_n$ of $U_{T-k}$. \\
\textbf{if} $f(x_k) > \widebar{Y} - \varepsilon$ \textbf{ return} $k$. \textbf{else} Sample $X_{k+1} = x_{k+1}$.\\
\textbf{if} $k+1 = T$  \textbf{ return} $T$.
}
\caption{Optimal Stopping Time via MUSE}\label{alg.optimal_stopping_time}
\end{algorithm}
There are multiple ways of choosing $\varepsilon$, which clearly depend on the decision maker's risk sensitivity. One promising option would be to choose  $\varepsilon$ adaptively, according to the CIs derived by the MUSE. 

\section{Numerical Experiments} \label{sect.numerical_experiments}
\subsection{Optimal Stopping of Independent Random Variables}\label{sect.i.i.d.}
The optimal stopping problem for independent random variables has been extensively studied in the literature.
In this example, we consider the case where $X_1,\cdots, X_T$ are $i.i.d$. $\calN (0,1)$ random variables with reward  $f(x) = x$. Standard calculation yields  $U_1 = 0$, $U_2 = \EE\left[\lvert X_1\rvert\right]/2$ and $U_k = \EE\left[\max\{{X_1}, U_{k-1}\}\right]$ so that the utility can be solved numerically. With each fixed time horizon, three estimators -- MUSE and two vanilla Monte Carlo estimators MC1 and MC2 are implemented.  MC1 is a naive Monte Carlo estimator. For each $T$, it samples $10^7$ paths and estimates $U_T$ by the average of the maximum in each path, which is clearly biased. MC2 is a refinement of MC1 but still  biased. It samples tree-like paths as described in Section \ref{sect.intro}, Figure \ref{fig.Tree}. In our case, the simulated data forms a forest that consists of $1000$ complete $5$-ary trees of depth $T$. MC2 estimates the utility using the dynamical programming formula $U_{T} = \EE \left[\max \left\lbrace X_{1}, U_{T-1}(X_1)\right\rbrace \right]$ in a backward recursive way. Given the history $X_1, \cdots, X_{n-1}$, the utility of $U_{1}(X_1, \cdots, X_{n-1})$ can be easily estimated by averaging the samples in the last layer. Similarly, we can use the formula $U_{2}(X_1,\cdots, X_{n-2}) =  \EE \left[\max \left\lbrace X_{n-1}, U_{1}(X_1,\cdots, X_{n-1})\right\rbrace \right]$ to estimate $U_2$ after replacing the quantity $U_{1}(X_1,\cdots, X_{n-1})$ by its estimator described above. Then we estimate $U_3, U_4, \cdots$ and finally $U_T$. Formally, the final estimator $\hat U_T = (\sum_{i=1}^{1000} \hat U_T^{(i)} )/1000$ is the average of the $1000$ estimators from each tree. For each $i$, the estimator $\hat U_T^{(i)}$ of tree $i$ is of  the form $\hat U_{T}^{(i)}:= (\sum_{j = 1}^5 \max\{X^{(i)}_{1,1}, \hat U_{T-1}^{(i)}\})/5$, where the number $5$ comes from the $5$-ary tree design,  $X^{(i)}_{1,1}$ is the root node of the $i$-th tree, and $\hat U_{T-1}^{(i)}$ is the estimator for $U_{T-1}(X^{(i)}_{1,1})$ using the dynamical programming procedure mentioned above. 

The only hyperparameter for the MUSE is the success probability $r$ for the geometric distribution. Larger $r$ leads to shorter computational time but larger variance, and vice versa. We implement a simple experiment to determine $r$. For each $r$ in $\{0.51, 0.52, \ldots, 0.7\}$, we run $10^6$ MUSEs for horizon $T = 3$ and examine their empirical performances. Our results are summarized in Figure \ref{fig.SuccessProb}. The cost of time decays significantly when $r$ increases. Furthermore, we also calculate the self-normalized variance \cite{Blanchet2015UnbiasedMC} as a measure of efficiency.  The self-normalized  variance is defined as the product between the expected time and the variance for every single estimator. It is clear from the right subplot of Figure \ref{fig.SuccessProb} that the self-normalized variance initially decays and then increases as $r$ increases, with a minimum at around $0.6$, therefore we choose $r = 0.6$ in the numerical experiments henceforth.

\begin{figure}[!htbp]
    \centering
    \includegraphics[height = 7cm, width=0.9\textwidth]{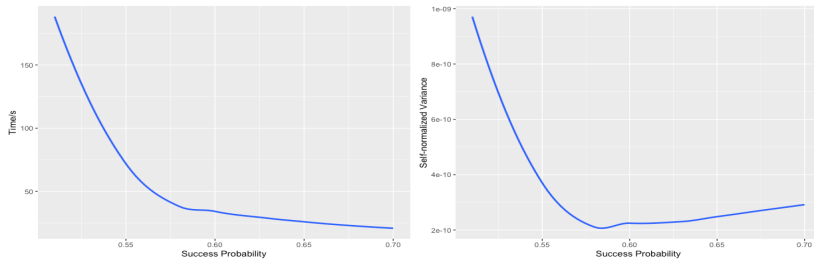}
    \caption{Left: The cost of time for generating $10^6$ MUSEs with different success probabilities. 
    Right: The self-normalized variance of the MUSEs with different success probabilities.}
    \label{fig.SuccessProb}
\end{figure}

After setting up the hyperparameter, we implement the three methods for $T\in\{2,\cdots, 7\}$. Our results are presented in Figure \ref{fig.MUSE_iid}. Both MC1 (red curve) and MC2 (green curve) systematically overestimate the true utility (black dotted line), as expected. The accuracy of MC1 is poor while MC2 has much better accuracy, sometimes comparable with the MUSE. The MUSE (blue curve) uses parameters $r_i = 0.6$ for each stage\footnote{To ease the computation burden, the parameters chosen here do not strictly follow Theorem \ref{thm.multi_stage_unbiased}. }, and averages of $10^6$ estimators for each $T$. It typically has  the most accurate result among all three methods.  To better understand the empirical convergence behavior of the MUSE, we also show the traceplot for the running average of the MUSE for each horizon in the right subplot of Figure \ref{fig.MUSE_iid}. It is clear from the traceplot that the CIs typically covers the ground truth, though
the convergence becomes much slower when $T$ is increasing. 

\begin{figure}[!htbp]
    \centering
    \includegraphics[height = 6cm, width=\textwidth]{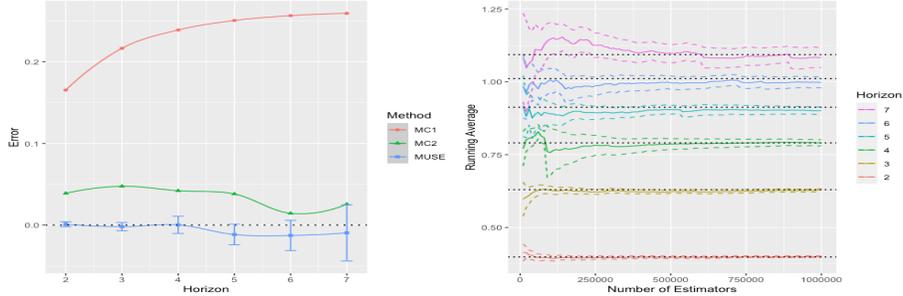}
    \caption{Left: Comparison between the errors of the MUSE (blue), MC1(red), and MC2(green) for estimating the utility for $i.i.d.$ standard Gaussian random variables. Blue error bars stands for the $95\%$ confidence intervals of the MUSE.  Black dotted line stands for the ground truth (error = $0$).
    Right: The traceplot of the running averages of the MUSE with different horizons. Black dotted line stands for the ground truth. Colored dashed lines stands for the running $95\%$ CIs of the MUSE}
    \label{fig.MUSE_iid}
\end{figure}

\subsection{Pricing the Bermudan options with high-dimensional inputs on a computer cluster}\label{sect.Bermudan-basket}
In this section we consider a more challenging setup, where the underlying process $\mathbf{X}_t := \left(X_t^{(1)}, \cdots, X_t^{(d)}\right)$ takes values in a high-dimensional space $\mathbb R^d$. The example we are considering here is a standard one -- pricing the high-dimensional Bermudan-basket put options. The underlying process is a $d$-dimensional independent geometric Brownian motion with drift $\gamma-\delta$ and volatility $\sigma$ where all parameters will be specified later. Bermudan-basket option has utility $f(t, \bX_t) = e^{-\gamma t} \max\{0, K - \sum_{i=1}^d X_t^{(i)}/d\}$ at each $t$, where $K$ is the strike price and $e^{-\gamma}$ is often referred to as the discounting factor. Bermudan option is only exercisable in a discrete set of times, which transforms the pricing problem to solving the optimal stopping problem:
$U_T :=\sup_{\tau\in\{T_1,\cdots T_k\}} \EE \left[f\left(\tau, \mathbf{X}_\tau\right)\right]$, where $0\leq T_1 \leq \cdots \leq T_k\leq T$ are all the  exercisable dates. It has been observed \cite{herrera2021optimal}  that the computational cost for standard regression-based methods typically scales superlinearly with dimension $d$, which discourages their uses in the high-dimensional setups. Existing experiments on Bermudan options often 
assumes $d\leq 20$, though it can be as large as $5000$ in practice  \cite{becker2019solving}. 

In our experiment we adopt the standard parameters in \cite{bender2006policy, jain2012pricing} where $T = 3$ (years), $\sigma = 0.2, \gamma = 0.05, \delta = 0, K = X^{(i)}_0 = 100$ for every $i$. Owners can exercise the option at the initial time or after $1,2,3$ years. We first benchmark our result with the results reported in \cite{bender2006policy, jain2012pricing} when  $d = 5$, next we present our results for $d \in \{10,20,100,1000\}$. For each $d$, we use $10^7$ MUSEs generated by a $500$-core CPU-based computer cluster, where the parameters $r_i$ are set to be $0.6$ for each stage. The results  when $d = 5$ is presented in Table \ref{tab.d=5}, the MUSE matches the results from other methods while preserving unbiasedness and having a relatively small standard error.

\begin{table}[htbp!]
\scalebox{0.9}{
\begin{tabular}{|l|l|l|l|l|l|}
\hline
\multicolumn{1}{|c|}{Method} & \multicolumn{1}{c|}{\begin{tabular}[c]{@{}c@{}}LSM \\ (s.e.)\end{tabular}} & \multicolumn{1}{c|}{\begin{tabular}[c]{@{}c@{}}SGM direct\\     (s.e.)\end{tabular}} & \multicolumn{1}{c|}{\begin{tabular}[c]{@{}c@{}}SGM LB\\   (s.e.)\end{tabular}} & \multicolumn{1}{c|}{\begin{tabular}[c]{@{}c@{}}BKS\\ ($95\%$ CI)\end{tabular}} & \multicolumn{1}{c|}{\textbf{\begin{tabular}[c]{@{}c@{}}MUSE\\  (s.e.)\end{tabular}}} \\ \hline
$d = 5$                      & $2.163 (0.001)$                                                            & $2.141 (0.008)$                                                                      & $2.134 (0.012)$                                                                & $[2.154, 2.164]$                                                               & {$\mathbf{2.161 (0.004)}$}                                                             \\ \hline
\end{tabular}}
\caption{Comparison between different methods when $d = 5$. SGM and BKS stands for results reported by \cite{jain2012pricing} and \cite{bender2006policy} respectively. LSM stands for Longstaff–Schwartz method, reported by \cite{jain2012pricing}.}\label{tab.d=5}
\end{table}

Table \ref{tab.different d} records the estimates and the standard errors of the MUSE when $d$ is increasing. There are no existing benchmark results for large $d$ thus we are not able to compare with the ground truth. But the law of large numbers shows the utility should converge to $0$ as $d$ goes to infinity, which matches our result here. We also record the average computing time for every processor in the last column of Table \ref{tab.different d}, the computation time scales sublinearly with the dimensionality $d$, which may be benefited from the use of vectorization in simulating the $d$-dimensional geometric Brownian motion. We also plot the histogram of the computing time among $500$ 
cores when $d = 100$ in Figure \ref{fig.hist d=100}. It is clear from Figure \ref{fig.hist d=100} that the computing times are relatively short (less than $15$ seconds) for most clusters even in this high-dimensional regime. There are a small proportion of clusters that uses much longer time. This fact indicates the MUSE has a high variance in its computational complexity, which is  in line with our theoretical intuitions. Finally, it seems the MUSE scales well with $d$, which may be another appealing feature besides parallel computing.

\begin{figure}[htbp!]
\begin{floatrow}

\capbtabbox{%
\scalebox{0.8}{%
 \begin{tabular}{|c|c|c|}
\hline
$d$  & \begin{tabular}[c]{@{}c@{}}MUSE \\ (s.e.)\end{tabular} & \begin{tabular}[c]{@{}c@{}}Average Time (s) \\ per processor\end{tabular} \\ \hline
$5$    & \begin{tabular}[c]{@{}c@{}}$2.161$ \\ ($0.004$)\end{tabular}                                        & $15.922$                                                                  \\ \hline
$10$   & \begin{tabular}[c]{@{}c@{}}$0.985$ \\ ($0.002$)\end{tabular}                                      & $14.787$                                                                  \\ \hline
$20$   & \begin{tabular}[c]{@{}c@{}}$0.355$ \\ ($0.001$)\end{tabular}                                     & $16.004$                                                                  \\ \hline
    $100$  & \begin{tabular}[c]{@{}c@{}}$0.0043$ \\ ($<10^{-4}$)\end{tabular}                        & $18.271$                                                                  \\ \hline
$1000$ & $0 (0)$                                                & $32.191$                                                                  \\ \hline
\end{tabular}
}{%
  \caption{Results of the MUSE under different dimensions. The second column reports the means and standard errors of the MUSEs. The third column reports the average computation time over the $500$ processors. }\label{tab.different d}%
}}

\ffigbox{%
  \includegraphics[width = 5.5cm]{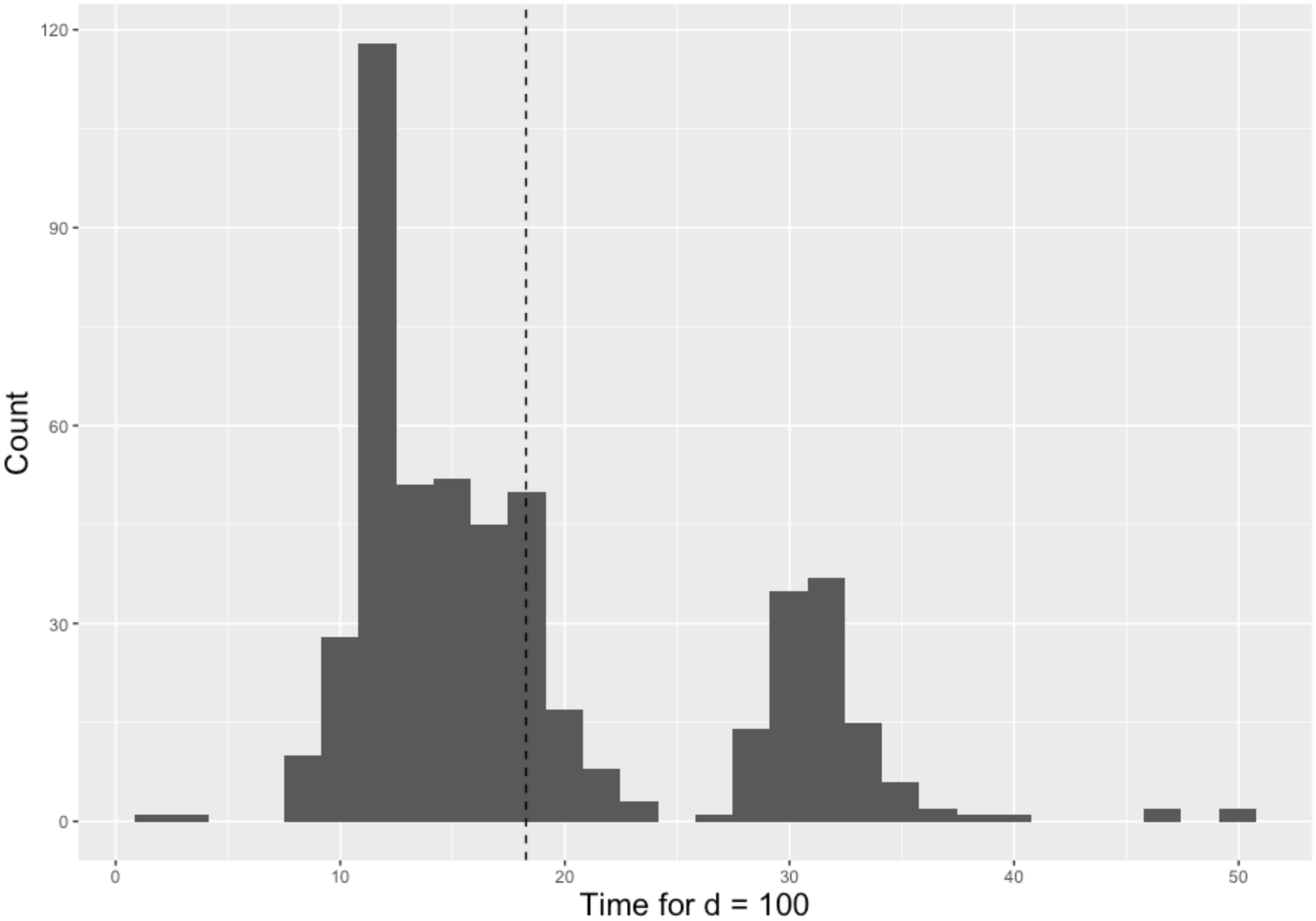}%
}{%
  \caption{Histogram of computational times among $500$ processors when $d = 100$. The black dotted line is the average computation time, which is $18.271$ seconds.}\label{fig.hist d=100}%
}
\end{floatrow}
\end{figure}

\section{Conclusion and Future Work} \label{sect.conclusion}
Optimal stopping problems play an important role in modern decision-making processes. However, existing simulation algorithms introduce unavoidable bias in estimating the utility. In this paper, an unbiased estimator, the MUSE, is proposed and analyzed. Our estimator is easy to implement and enjoys unbiasedness, finite variance, and finite computational complexity after choosing the parameters appropriately. A key ingredient of the general MUSE is the iterative use of the two-stage MUSE, which preserves unbiasedness at every stage by the multilevel approach.

In the theoretical part of this paper, we focus on bounding the variance and computational complexity of the MUSE. Though finite variance and finite complexity are guaranteed, these upper bounds may be too crude to shed light on practical applications. Moreover, theoretical guarantees on the applications described in Section \ref{sect.applications}, such as regret bounds for Algorithm \ref{alg.optimal_stopping_time}, are worth investigating.

In the numerical studies,  experiments in Section \ref{sect.numerical_experiments} suggest the MUSE is able to provide accurate estimation for the utilities, especially when $T$ is small or moderate. The MUSE also seems to scale well with the dimensionality of the underlying process, as shown in Section \ref{sect.Bermudan-basket}.  On the other hand, our estimator's variance and computational complexity grow significantly with the horizon length. Running Algorithm \ref{alg.multi_stage_MLMC} can easily be prohibitive with large horizons. It remains a key challenge to design scalable algorithms while maintaining unbiasedness, or at least controlling bias at a negligible level under the large-horizon regime. As a closing remark, unbiasedness is undoubtedly an appealing property in parallel computation, but it could come with higher computational cost or lower statistical accuracy. Therefore, studying the trade-offs between unbiasedness, computational budget constraints, and accuracy may be of paramount interest to both theorists and practitioners. We hope future studies will provide much-needed insight toward achieving practical unbiasedness with sustainable cost and high accuracy.

While our paper focus on the optimal stopping problem, we believe our technique can potentially be extended to more general setups. The optimal stopping is a subfamily of the stochastic control problems, where one picks the optimal time to stop. One natural extension is the case where the one needs to choose one out of $K$ possible actions at each stage. Moreover, our paper considers the optimal problem with finite horizon. Another natural extension is to consider the infinite horizon stopping problem with discounted reward.

\section{Acknowledgement}
Material in this paper is based upon work supported by the Air Force Office of Scientific Research under award number FA9550-20-1-0397. Additional support is gratefully acknowledged from NSF grants 1915967, 1820942, 1838576, 2210849. Guanyang Wang would like to sincerely thank Changpeng Lu for her patient guidance on configuring the computer cluster, and David Sichen Wu for helpful suggestions on improving this paper. We would like to thank  two anonymous referees for their time, effort, and  constructive suggestions.

\appendix
\section{Auxiliary Results}
\begin{lemma}[\cite{marcinkiewicz1937quelques} Marcinkiewicz-Zygmund inequality]\label{lemma.M-Z_inequality}
If $X_1,\cdots, X_n$ are independent random variables with $\EE[X_i] = 0$ and $\EE\left[|X_i|^p\right]<\infty$ for some $p>2$. Then,
\begin{equation*}
    \EE \left[\left|\sum_{i=1}^nX_i\right|^p\right]\leq C_p'\EE \left[\left(\sum_{i=1}^n|X_i|^2\right)^{p/2}\right],
\end{equation*}
where $C_p'$ is a constant that only depends on $p$. If we further assume that $X_1, \cdots, X_n$ are i.i.d.. Then, 
\begin{equation*}
    \EE \left[\left|\frac{1}{n}\sum_{i=1}^nX_i\right|^p\right]\leq C_p' \EE \left[\frac{1}{n^{p/2}}\left[\frac{1}{n}\sum_{i=1}^n|X_i|^2\right]^{p/2}\right]\leq  C_p'\cdot\frac{\EE|X_1|^p}{n^{p/2}}.
\end{equation*}
\end{lemma}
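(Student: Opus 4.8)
The plan is to reduce the general mean-zero independent case to a Rademacher (random sign) average via symmetrization, and then to invoke Khintchine's inequality, which carries the essential analytic content; the rest is convexity bookkeeping. First I would symmetrize. Let $(X_i')_{i=1}^n$ be an independent copy of $(X_i)_{i=1}^n$ and set $\tilde X_i := X_i - X_i'$. Since $\EE[X_i']=0$, we have $\sum_i X_i = \EE\big[\sum_i \tilde X_i \mid (X_i)_i\big]$, so conditional Jensen applied to the convex map $t\mapsto |t|^p$ gives
\[
\EE\left[\Big|\sum_i X_i\Big|^p\right] \le \EE\left[\Big|\sum_i \tilde X_i\Big|^p\right].
\]
Each $\tilde X_i$ is symmetric, hence $(\epsilon_i \tilde X_i)_i \eqdistr (\tilde X_i)_i$ for independent Rademacher signs $\epsilon_i$, so $\EE[|\sum_i \tilde X_i|^p] = \EE[|\sum_i \epsilon_i \tilde X_i|^p]$.

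Next, conditioning on $(\tilde X_i)_i$ and applying Khintchine's inequality $\EE_\epsilon[|\sum_i \epsilon_i a_i|^p] \le B_p (\sum_i a_i^2)^{p/2}$ to the sign-average with the $a_i$ frozen yields $\EE[|\sum_i \epsilon_i \tilde X_i|^p] \le B_p\, \EE[(\sum_i \tilde X_i^2)^{p/2}]$. I would then de-symmetrize the bracket: since $\tilde X_i^2 \le 2(X_i^2 + (X_i')^2)$ and $p/2 \ge 1$, convexity gives $(\sum_i \tilde X_i^2)^{p/2} \le 2^{p-1}\big[(\sum_i X_i^2)^{p/2} + (\sum_i (X_i')^2)^{p/2}\big]$; taking expectations and using that $(X_i')$ is a copy collapses both terms, so $\EE[(\sum_i \tilde X_i^2)^{p/2}] \le 2^{p}\,\EE[(\sum_i X_i^2)^{p/2}]$. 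Chaining the three bounds establishes the first display with $C_p' = 2^p B_p$.

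The main obstacle is Khintchine's inequality itself, which I would prove via the sub-Gaussian route: Hoeffding's lemma gives $\EE[e^{\lambda \sum_i \epsilon_i a_i}] \le e^{\lambda^2 \|a\|_2^2 / 2}$, and then the layer-cake identity $\EE[|Z|^p] = p\int_0^\infty t^{p-1}\PP(|Z|>t)\,dt$ combined with the Chernoff tail bound $\PP(|\sum_i \epsilon_i a_i| > t) \le 2e^{-t^2/(2\|a\|_2^2)}$ produces $\EE[|\sum_i \epsilon_i a_i|^p] \le B_p \|a\|_2^p$ with $B_p$ an explicit gamma-type constant depending only on $p$. Alternatively, the entire first display follows in one stroke from the Burkholder--Davis--Gundy inequality, since $S_k = \sum_{i\le k} X_i$ is a martingale whose quadratic variation is $\sum_i X_i^2$; I would cite this if a self-contained treatment of Khintchine is deemed out of scope.

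Finally, the i.i.d. refinement is immediate from the first display. Dividing through by $n^p$ and rewriting $n^{-p}(\sum_i X_i^2)^{p/2} = n^{-p/2}\big(n^{-1}\sum_i X_i^2\big)^{p/2}$ produces the middle quantity, and the last inequality follows from Jensen applied to the convex $t\mapsto t^{p/2}$, namely $\big(n^{-1}\sum_i |X_i|^2\big)^{p/2} \le n^{-1}\sum_i |X_i|^p$, whose expectation equals $\EE|X_1|^p$ under the i.i.d. assumption. No step here poses any real difficulty once Khintchine's inequality is in hand.
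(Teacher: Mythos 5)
The paper offers no proof of this lemma at all: it is imported verbatim from the classical literature (the citation to Marcinkiewicz--Zygmund 1937), so there is nothing internal to compare against, and the relevant question is simply whether your self-contained argument is sound. It is. The symmetrization step is right: with $\tilde X_i = X_i - X_i'$ you have $\sum_i X_i = \EE[\sum_i \tilde X_i \mid (X_j)_j]$ because $\EE[X_i']=0$, and conditional Jensen with $t \mapsto |t|^p$ gives the claimed domination; the Rademacher insertion is legitimate since each $\tilde X_i$ is symmetric and the $\tilde X_i$ are independent; conditioning on $(\tilde X_i)_i$ and applying Khintchine with exponent $p \ge 2$ is exactly the standard move, and your de-symmetrization constants check out ($\tilde X_i^2 \le 2(X_i^2 + (X_i')^2)$ plus convexity of $t^{p/2}$ gives $2^{p/2}\cdot 2^{p/2-1} = 2^{p-1}$, hence $2^p$ after collapsing the two identically distributed terms, and $C_p' = 2^p B_p$). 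Your sub-Gaussian derivation of Khintchine via Hoeffding plus the layer-cake formula yields $B_p = p\,2^{p/2}\Gamma(p/2)$, which is indeed a gamma-type constant depending only on $p$, and the BDG alternative you mention is equally valid since the quadratic variation of the partial-sum martingale is $\sum_i X_i^2$. The i.i.d.\ refinement is also correct: the algebraic identity $n^{-p}\bigl(\sum_i X_i^2\bigr)^{p/2} = n^{-p/2}\bigl(n^{-1}\sum_i X_i^2\bigr)^{p/2}$ produces the middle expression of the lemma, and the final step $\bigl(n^{-1}\sum_i |X_i|^2\bigr)^{p/2} \le n^{-1}\sum_i |X_i|^p$ is Jensen on the discrete uniform average --- worth noting that this last step uses only identical distribution, not independence. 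One small caveat: your argument is tailored to $p \ge 2$ (Khintchine's upper bound in this form and the convexity bookkeeping both need $p/2 \ge 1$), which is precisely the regime $p > 2$ the lemma assumes; the full Marcinkiewicz--Zygmund inequality for $1 \le p < 2$ requires a different treatment, but that is out of scope here.
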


\begin{corollary}\label{cor.conditional_M-Z_inequality}
Let $(Z_1, Z_2)$ be a $2-$stage stochastic process, and there exists $p>2$, such that $\sup_{i=1,2}\EE\left[|Z_i|^p\right]<\infty$. Conditioning on $Z_1$, sample i.i.d. $Z_{2}(1),\cdots, Z_{2}(n)$. Then, 
\begin{equation*}
    \EE \left[\left|\frac{1}{n}\sum_{i=1}^n Z_2(i) - \EE[Z_2 \mid Z_1]\right|^p\right] \leq C_p \cdot\frac{\EE\left[|Z_2|^p\right]}{n^{p/2}},
\end{equation*}
where $C_p$ is an universal constant only depends on $p$.
\end{corollary}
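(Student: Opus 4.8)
The plan is to apply the i.i.d.\ Marcinkiewicz--Zygmund inequality (Lemma \ref{lemma.M-Z_inequality}) conditionally on $Z_1$, and then integrate out $Z_1$ using the tower property. The key observation is that, conditionally on $Z_1$, the centered variables $W_i := Z_2(i) - \EE[Z_2\mid Z_1]$ are i.i.d.\ with conditional mean zero and finite conditional $p$-th moment; this is exactly the regime covered by the second (i.i.d.) half of Lemma \ref{lemma.M-Z_inequality}, whose constant $C_p'$ is universal in the sense that it does not depend on the underlying distribution.

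First I would fix a realization of $Z_1$ and regard the conditional law of $Z_2$ given $Z_1$ as the common distribution of the i.i.d.\ draws $Z_2(1),\dots,Z_2(n)$. Since $\EE[W_i\mid Z_1]=0$ and $\EE[|W_i|^p\mid Z_1]<\infty$ for almost every $Z_1$, Lemma \ref{lemma.M-Z_inequality} yields, $\PP$-almost surely,
$$\EE\!\left[\left|\tfrac{1}{n}\sum_{i=1}^n W_i\right|^p \,\middle|\, Z_1\right]\le C_p'\cdot\frac{\EE[|W_1|^p\mid Z_1]}{n^{p/2}}.$$
Taking expectations over $Z_1$ and invoking the tower property then removes the conditioning, giving $\EE\big[|\tfrac1n\sum_i W_i|^p\big]\le C_p'\, n^{-p/2}\,\EE[|W_1|^p]$.

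It then remains to transfer the moment bound from $W_1$ back to $Z_2$. Writing $W_1 = Z_2 - \EE[Z_2\mid Z_1]$ and applying the $c_p$-inequality $|a+b|^p\le 2^{p-1}(|a|^p+|b|^p)$ together with conditional Jensen, $|\EE[Z_2\mid Z_1]|^p\le \EE[|Z_2|^p\mid Z_1]$, I obtain $\EE[|W_1|^p]\le 2^{p-1}\big(\EE[|Z_2|^p]+\EE[|\EE[Z_2\mid Z_1]|^p]\big)\le 2^{p}\,\EE[|Z_2|^p]$. Combining the two bounds gives the claim with $C_p = 2^{p} C_p'$, a constant depending only on $p$.

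The main obstacle is not the algebra but the measure-theoretic justification of the conditional application of Lemma \ref{lemma.M-Z_inequality}: one must argue that the unconditional inequality can be invoked ``pointwise in $Z_1$'' against a regular conditional distribution, that the resulting bound holds for $\PP$-almost every value of $Z_1$, and that the conditional expectations involved are jointly measurable so that the final integration over $Z_1$ is legitimate. Because $(\Omega,\mathcal{B}(\Omega))$ is Polish (Section \ref{sect.assumption}), a regular conditional distribution of $Z_2$ given $Z_1$ exists, and since the constant $C_p'$ in Lemma \ref{lemma.M-Z_inequality} is independent of the underlying law, the almost-sure bound is uniform and integrates cleanly. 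This is the one place where the Polish-space setup is genuinely used.
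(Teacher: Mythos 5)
Your proposal is correct and follows essentially the same route as the paper's proof: condition on $Z_1$, apply the i.i.d.\ case of Lemma~\ref{lemma.M-Z_inequality} to the centered variables under the conditional law, bound $\EE\left[|Z_2-\EE[Z_2\mid Z_1]|^p\mid Z_1\right]$ by $2^{p}\,\EE\left[|Z_2|^p\mid Z_1\right]$ via the $c_p$-inequality and conditional Jensen, and integrate out $Z_1$, arriving at the same constant $C_p=2^pC_p'$. The only difference is that you make explicit the regular-conditional-distribution justification that the paper leaves implicit.
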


\begin{proof}[Proof of Corollary \ref{cor.conditional_M-Z_inequality}]
Since $\EE\left[|Z_2|^p\right] < \infty$, we have that $\EE\left[|Z_2|^p \mid Z_1 = z_1\right]$ exists almost surely. Let $\pi_{1:2}$ be the joint measure of $(Z_1, Z_2)$, applying Lemma \ref{lemma.M-Z_inequality} to the conditional distribution $\pi_2(\cdot \mid z_1)$ yields
\begin{align*}
    \EE \left[\left|\frac{1}{n}\sum_{i=1}^nZ_2(i) - \EE[Z_2 \mid Z_1]\right|^p\right] &= \int_{\Omega}\EE \left[\left|\frac{1}{n}\sum_{i=1}^n Z_2(i) - \EE [Z_2 \mid Z_1]\right|^p \;\middle|\; Z_1 = z_1\right]\pi_1(dz_1)\\
    &\leq \int_{\Omega} \frac{C_p'}{n^{p/2}} \EE \left[|Z_2 - \EE[Z_2\mid Z_1]|^p \mid Z_1 = z_1\right]\pi_1(dz_1)\\
    &\leq \int_{\Omega} \frac{C_p'2^{p-1}}{n^{p/2}} \left[\EE \left[|Z_2|^p | Z_1 = z_1 \right] + \left|\EE[Z_2 | Z_1 = z_1]\right|^p\right]\pi_1(dz_1)\\
    &\leq \int_{\Omega} \frac{C_p'2^p}{n^{p/2}} \EE \left[|Z_2|^p \mid Z_1 = z_1 \right] \pi_1(dz_1) = \frac{C_p'2^p}{n^{p/2}}\EE[|Z_2|^p].
\end{align*}
\end{proof}

\section{Proofs of Main Theorems}
 We first present the proof of Theorem \ref{thm.one_step_unbiased}. 
\begin{proof}[Proof of Theorem \ref{thm.one_step_unbiased}]
We first show that  $\EE[Y] = \EE \left[\max\left\lbrace f(X_1), \EE \left[f(X_2) \mid X_1\right]\right\rbrace\right]$. Note that the $X_1, X_2$ are integrable, 
\begin{align}
    \nonumber&\EE [Y] \\
    = \quad & \EE \left[\EE \left[Y \mid N\right]\right] \label{eq.proof_thm1_exchange_1}\\
    = \quad & \sum_{n=0}^{\infty} \EE [\Delta_n] \label{eq.proof_thm1_exchange_2}\\
    \nonumber= \quad  & \sum_{n=1}^{\infty} \left(\EE\left[\max\left\lbrace f\left(X_1(1)\right), \frac{S_{2^n}}{2^n}\right\rbrace\right] - \EE\left[\max\left\lbrace f\left(X_1(1)\right), \frac{S_{2^{n-1}}}{2^{n-1}}\right\rbrace\right] \right) + \\
    \nonumber&\qquad \EE \left[\max\left\lbrace f\left(X_1(1)\right), f\left(X_2(1)\right)\right\rbrace\right]\\
    \nonumber= \quad & \lim_{n\rightarrow \infty} \EE \left[\max\left\lbrace f\left(X_1(1)\right), \frac{S_{2^n}}{2^n}\right\rbrace\right] - \EE \left[f\left(X_1(1)\right), S_1\right] + \EE \left[\max\left\lbrace f\left(X_1(1)\right), f\left(X_2(1)\right)\right\rbrace\right]\\
    = \quad & \EE \left[\max\left\lbrace f\left(X_1(1)\right), \lim_{n\rightarrow \infty}\frac{S_{2^n}}{2^n}\right\rbrace\right]\label{eq.proof_thm1_lln_1}\\
    = \quad & \EE \left[\max\left\lbrace f\left(X_1(1)\right), \EE[f(X_2) \mid X_1(1)]\right\rbrace\right]. \label{eq.proof_thm1_lln_2}
\end{align}
Here the law of large number is applied to guarantee the equality between \eqref{eq.proof_thm1_lln_1} and \eqref{eq.proof_thm1_lln_2}, and the equality of \eqref{eq.proof_thm1_exchange_1}, and \eqref{eq.proof_thm1_exchange_2} is established by the interchanging the order of summation and expectation, which is legitimate due to the fact that 
\begin{equation}\label{ieq.condition_interchange_sum_expectation}
    \sum_{n=0}^n \EE \left[|\Delta_n|\right] < \infty.
\end{equation}
To verify the inequality \eqref{ieq.condition_interchange_sum_expectation}, note that $\max\{x, a\}$ is a 1-Lipschitz function of $x$ for any fixed $a$, we have 
\begin{align*}
    |\Delta_n|&\leq \frac{1}{2}\left|\max\left\lbrace f\left(X_1(1)\right), \frac{S_{2^n}}{2^n}\right\rbrace - \max\left\lbrace f\left(X_1(1)\right), \frac{S_{2^{n-1}}^O}{2^{n-1}}\right\rbrace\right| \\
    &\qquad + \frac{1}{2}\left|\max\left\lbrace f\left(X_1(1)\right), \frac{S_{2^n}}{2^n}\right\rbrace - \max\left\lbrace f\left(X_1(1)\right), \frac{S_{2^{n-1}}^E}{2^{n-1}}\right\rbrace\right|\\
    &\leq \frac{1}{2}\left|S_{2^{n-1}}^O/2^{n-1} - S_{2^{n-1}}^E/2^{n-1}\right|.
\end{align*}
By Corollary \ref{cor.conditional_M-Z_inequality}, and note that $|f(x)|\leq L(1 + \|x\|)$, we have 
\begin{align*}
    \sum_{n=0}^\infty \EE \left[|\Delta_n|\right]
    & \leq 2L^{2+\delta}\left[1 + \sup_{i=1,2} \left[\EE\left[\|X_i\|^{2+\delta}\right]\right]^{\frac{1}{2+\delta}}\right] +  \frac{1}{2}\sum_{n=1}^\infty  \left[\EE\left[\left|\frac{S^O_{2^{n-1}}}{2^{n-1}} - \frac{S^E_{2^{n-1}}}{2^{n-1}}\right|^{2+\delta}\right]\right]^{\frac{1}{2+\delta}}\\
    &\leq 2L^{2+\delta}\left[1 + \sup_{i=1,2} \left[\EE\left[\|X_i\|^{2+\delta}\right]\right]^{\frac{1}{2+\delta}}\right] + \frac{1}{2}\sum_{n=1}^{\infty}\left[\frac{C_{2+\delta}2^{2+\delta}\EE\left[|f(X_2)|^{2+\delta}\right]}{2^{(n-1)(2+\delta)/2}}\right]^{\frac{1}{2+\delta}}\\
    &= 2L^{2+\delta}\left[1 + \sup_{i=1,2} \left[\EE\left[\|X_i\|^{2+\delta}\right]\right]^{\frac{1}{2+\delta}}\right] + C_{2+\delta}^{\frac{1}{2+\delta}}L2^{\frac{1+\delta}{2+\delta}}\left[1 + \EE\|X_2\|^{2+\delta}\right]^{\frac{1}{2+\delta}} \sum_{n=1}^{\infty} \frac{\sqrt{2}}{2^{\frac{n}{2}}}\\
    & < \infty. 
\end{align*}

Next, we show that $Y$ satisfies the properties (2) and (3) in the Theorem \ref{thm.one_step_unbiased}. Namely, finite expected sampling complexity and bounded $2+\delta/10$ moment. In order to bound the $2+\delta/10$ moment of $Y$, we introduce the following events:
\begin{align*}
    &E_1 := \left\lbrace \left|\EE [f(X_2) \mid X_1(1)] - f(X_1(1))\right| < \varepsilon\right\rbrace, \\
    &E_2 := \left\lbrace \left|S_{2^{n-1}}^O/2^{n-1} - \EE [f(X_2) \mid X_1(1)]\right|\geq \varepsilon / 2\right\rbrace, \\
    &E_3 := \left\lbrace \left|S_{2^{n-1}}^O/2^{n-1} - S_{2^{n-1}}^E/2^{n-1}\right|\geq \varepsilon / 2\right\rbrace.
\end{align*}
Observe that 
\begin{equation*}
    \EE \left[|\Delta_n|^{2+\delta/10}\right] = \EE \left[|\Delta_n|^{2+\delta/10}\mathbbm{1}(E_1^c\cap E_2^c\cap E_3^c)\right] + \EE \left[|\Delta_n|^{2+\delta/10}\mathbbm{1}(E_1\cup  E_2\cup E_3)\right]. 
\end{equation*}
On the event $E_1^c\cap E_2^c\cap E_3^c$, we have 
\begin{align*}
    \left|\EE [f(X_2) \mid X_1(1)] - f(X_1(1))\right| &\geq \varepsilon,\\
    \left|S_{2^{n-1}}^O/2^{n-1} - \EE [f(X_2) \mid X_1(1)]\right|&\leq \varepsilon / 2,\\
    \left|S_{2^{n-1}}^O/2^{n-1} - S_{2^{n-1}}^E/2^{n-1}\right|&\leq \varepsilon / 2.
\end{align*}
Thus, both $S_{2^{n-1}}^O/2^{n-1}$ and  $S_{2^{n-1}}^E/2^{n-1}$ are on the same side of $f(X_1(1))$. Since $S_{2^n} = S_{2^{n-1}}^O + S_{2^{n-1}}^E$, we get
\begin{equation*}
    \Delta_n = \max\left\lbrace f(X_1(1)), \frac{S_{2^n}}{2^n}\right\rbrace - \frac{1}{2}\left[\max\left\lbrace f(X_1(1)), \frac{S_{2^{n-1}}^O}{2^{n-1}}\right\rbrace + \max\left\lbrace f(X_1(1)), \frac{S_{2^{n-1}}^E}{2^{n-1}}\right\rbrace\right] = 0. 
\end{equation*}
In other words, 
\begin{equation}\label{eq.delta_good}
    \EE \left[|\Delta_n|^{2+\delta/10}\mathbbm{1}(E_1^c\cap E_2^c\cap E_3^c)\right] = 0.
\end{equation}
Next, we bound the term $\EE \left[|\Delta_n|^{2+\delta/10}\mathbbm{1}(E_1\cup  E_2\cup E_3)\right]$. By H\"older's inequality (with parameter $p = (2+\delta) / (2 + \delta / 10)$, and $q = (20 + 10\delta) / (9\delta)$. It is straight forward to verify that $1/p + 1/q = 1$),
\begin{align*}
    \EE \left[|\Delta_n|^{2+\delta/10}\mathbbm{1}(E_1\cup  E_2\cup E_3)\right] &\leq \left[\EE \left[|\Delta_n|^{(2+\delta/10)\cdot \frac{2+\delta}{2+\delta/10}}\right]\right]^{\frac{2+\delta/10}{2+\delta}}\cdot\EE[\mathbbm{1}(E_1\cup  E_2\cup E_3)]^{\frac{9\delta}{20+10\delta}}\\
    &\leq \left[\EE \left[|\Delta_n|^{2+\delta}\right]\right]^{\frac{2+\delta/10}{2+\delta}}\cdot \left(\PP(E_1) + \PP(E_2) + \PP(E_3)\right)^{\frac{9\delta}{20+10\delta}}. 
\end{align*}
Take $\varepsilon = \frac{1}{2^{n/4}}$, by the assumption in \eqref{aspn.bounded_density},
\begin{equation*}
    \PP(E_1) \leq C\varepsilon.
\end{equation*}
Now, we bound the probabilities $\PP(E_2)$ and $\PP(E_3)$. By Corollary \ref{cor.conditional_M-Z_inequality}, there exists a universal constant $C_{2+\delta} >0$, such that 
\begin{align*}
    \PP(E_2)&\leq \frac{1}{(\varepsilon/2)^{2+\delta}}\EE \left[\left|S_{2^{n-1}}^O - \EE [f(X_2) \mid X_1(1)]\right|^{2+\delta}\right]\\
    &\leq \frac{1}{(\varepsilon/2)^{2+\delta}} \cdot \frac{C_{2+\delta}2^{1+\delta}L^{2+\delta}\left[1 + \EE\left[\|X_2\|^{2+\delta}\right]\right]}{2^{(n-1)(2+\delta)/2}}\\
    &= C_{2+\delta}2^{2+3\delta/2}L^{2+\delta}\cdot \left[1 + \EE\left[\|X_2\|^{2+\delta}\right]\right] \varepsilon\cdot \frac{1}{2^{(\frac{1}{4} + \frac{\delta}{4})n - (2+\delta)}}\\
    &\leq C_{2+\delta}2^{2+3\delta/2}L^{2+\delta}\cdot \left[1 + \EE\left[\|X_2\|^{2+\delta}\right]\right]\varepsilon.
\end{align*}

Similarly, 
\begin{equation*}
    \PP (E_3) \leq C_{2+\delta} 2^{4+5\delta/2}L^{2+\delta} \EE \left[1 + \EE \left[\|X_2\|^{2+\delta}\right]\right]\varepsilon.
\end{equation*}
Moreover, recall that 
\begin{align*}
    |\Delta_n|\leq \frac{1}{2}\left|S_{2^{n-1}}^O/2^{n-1} - S_{2^{n-1}}^E/2^{n-1}\right|,
\end{align*}
by Corollary \ref{cor.conditional_M-Z_inequality} again, we have 
\begin{align*}
    \left[\EE \left[ |\Delta_n|^{2+\delta} \right]\right]^{\frac{2+\delta/10}{2+\delta}} &\leq \left(C_{2+\delta}L^{2+\delta}\left[1+ \EE\left[\|X_2\|^{2+\delta}\right]\right]\right)^{\frac{2+\delta/10}{2+\delta}}\cdot\frac{1}{(2^{n})^{\frac{2+\delta/10}{2}}}\\
    &\leq \left(C_{2+\delta}L^{2+\delta}\left[1+ \EE\left[\|X_2\|^{2+\delta}\right]\right]\right)^{\frac{2+\delta/10}{2+\delta}}\cdot\frac{1}{2^n}.
\end{align*}
Thus, there exists universal constant $C'>0$, such that 
\begin{align}\label{eq.delta_bad}
\nonumber\EE \left[|\Delta_n|^{2+\delta/10}\mathbbm{1}(E_1\cup  E_2\cup E_3)\right] &\leq C'\left(L^{2+\delta}\left[1+ \EE\left[\|X_2\|^{2+\delta}\right]\right]\right)^{\frac{2+\delta/10}{2+\delta} + \frac{9\delta}{20+10\delta}} \cdot \frac{1}{2^{\left(1 + \frac{9\delta}{80+40\delta}\right)n}}\\
& \leq C' L^{2+\delta}\left[1+ \EE\left[\|X_2\|^{2+\delta}\right]\right]\frac{1}{2^{\left(1 + \frac{9\delta}{80+40\delta}\right)n}}.
\end{align}
Now, combining \eqref{eq.delta_good} and \eqref{eq.delta_bad}, 
\begin{align*}
    \EE \left[|Y|^{2+\delta/10}\right] & \leq 2^{1+\delta/10} \sum_{n=0}^\infty \frac{\EE\left[|\Delta_n|^{2+\delta/10}\right]}{p_n^{1+\delta/10}} \\
    &\leq \frac{C'}{r} L^{2+\delta}\left[1+ \EE\left[\|X_2\|^{2+\delta}\right]\right]\sum_{n=0}^{\infty}\frac{1}{2^{\left(1 + \frac{9\delta}{80+40\delta}\right)n}2^{-\frac{2+9\delta/(80+40\delta)}{2+\delta/10}n\cdot (1+\delta/10)}}\\
    &= \frac{C'}{r} L^{2+\delta}\left[1+ \EE\left[\|X_2\|^{2+\delta}\right]\right] \sum_{n=0}^{\infty}\frac{1}{\left(2^{\frac{9\delta/(80+40\delta) - \delta/10}{2+\delta/10}}\right)^n} \\
    &\leq \widetilde{C} L^{2+\delta}\left[1+ \EE\left[\|X_2\|^{2+\delta}\right]\right] < \infty,
\end{align*}
where $\widetilde{C}$ is an universal constant independent of the process $(X_1, X_2)$, and we have also used the fact that  $\frac{9\delta}{80+40\delta}-\frac{\delta}{10} > 0$ when $\delta < 1/4$. 
Finally, the sampling complexity of $Y$ is 
\begin{equation*}
    \EE \left[2^N\right] = \sum_{n=0}^\infty 2^np_n \lesssim\sum_{n=0}^{\infty}\frac{1}{\left(2^{\frac{9\delta/(80+40\delta) - \delta/10}{2+\delta/10}}\right)^n} < \infty.
\end{equation*}
Thus, the expected computational cost of $Y$ is also finite. In sum, our estimator $Y$ satisfies all the desired properties in Theorem \ref{thm.one_step_unbiased}.
\end{proof}
Next, we provide the full proof of Theorem \ref{thm.multi_stage_unbiased}. %Before we state the proof, for notational convenience, for a generic $d$-tuple $ (v_1,\cdots, v_d)$, let $v_{i:j}:=(v_i,\cdots, v_j)$ for $1\leq i\leq j\leq d$.
\begin{proof}[Proof of Theorem \ref{thm.multi_stage_unbiased}]
By the standard dynamical programming for optimal stopping, we have
\begin{equation*}
\begin{cases}
U_1(X_{1:T-1})= \EE \left[f(X_T) \mid X_{1:T-1}\right],  \\
U_{T-k}(X_{1:k}) = \EE \left[\max \left\lbrace f\left(X_{k+1}\right), U_{T-(k+1)}(X_{1:k+1})\right\rbrace \mid X_{1:k}\right], \quad 0\leq k\leq T-2.
\end{cases}
\end{equation*}
Let $Y_{T-k}(x_{1:k})$ denote the output (which is a random variable) of Algorithm \ref{alg.multi_stage_MLMC} given the input history $x_{1:k}$ which is sampled from $X_{1:k}$. 
%An alternative interpretation of $Y_{T-k}(x_{1:k})$ is 
%\begin{equation*}
    %Y_{T-k}(x_{1:k}) = \EE \left[Y_{T-k}\mid x_{1:k}\right].
%\end{equation*}
For simplicity, let $\delta_k := \delta\cdot 10^{k+1-T}$ for $0 \leq k \leq T-1$. We will prove by a backward induction to show that:
\begin{enumerate}[leftmargin = 0.6cm]
    \item [(a)] $\EE_{\pi_{k+1:T}} [Y_{T-k}(x_{1:k})] = U_{T-k}(x_{1:k})$, \quad $0\leq k \leq T-1$.
    \item [(b)] The expected sampling complexity $=\prod\limits_{i=k+1}^{T-1}C_i < \infty$, $ 0\leq k\leq T-2 $. As a result, the expected computational complexity is also finite. 
    \item [(c)] $\EE_{\pi_{1:T}} \left[\left|Y_{T-k}(x_{1:k})\right|^{2+\delta_k}\right] < \left(\prod\limits_{i=k+1}^{T-1}\widetilde{C}_i\right)\cdot L^{2+\delta}\left[1+ \EE\left[\|X_T\|^{2+\delta}\right]\right]$, \quad {\textrm{for all }} $0\leq k\leq T-2$.
\end{enumerate}
Here $C_i, \widetilde{C}_i$ ($1\leq i\leq T-1$) are some positive constants independent of the underlying process. 

When $k=T-1$, we have $Y_{T-k}(x_{1:T-1}) = f(X_T)$ with $X_T$ sampled from $\pi_{T}(\cdot \mid \{x_i\}_{i=1}^{T-1})$, thus $(a)$ holds by definition. When $k=T-2$, we have $(a), (b)$ and $(c)$ are guaranteed exactly by Theorem \ref{thm.one_step_unbiased}.

Suppose that (a), (b) and (c) are held for $k+1$, where $0\leq k\leq T-3$. Conditioning on the input history $x_{1:k}$ (sampled from $X_{1:k}$), let's sample $x_{k+1}$ from $\pi_{k+1}\left(\cdot \mid \{x_i\}_{i=1}^k \right)$. Then, we sample $N_{k+1}\sim \text{Geo}(r_{k+1})$, and get $i.i.d.$ 
\begin{equation*}
Y_{T-(k+1)}(x_{1:k+1})(1),\cdots, Y_{T-(k+1)}(x_{1:k+1})\left(2^{N_{k+1}}\right).
\end{equation*}
Adapting the same notations as before, we define
\begin{align*}
    S_{2^{N_{k+1}}} &= \sum_{i=1}^{2^{N_{k+1}}}Y_{T-(k+1)}\left(x_{1:k+1}\right)(i),\\
    S^O_{2^{N_{k+1}-1}} &= \sum_{i=1}^{2^{N_{k+1}-1}}Y_{T-(k+1)}\left(x_{1:k+1}\right)(2i-1),\\
    S^E_{2^{N_{k+1}-1}} &= \sum_{i=1}^{2^{N_{k+1}-1}}Y_{T-(k+1)}\left(x_{1:k+1}\right)(2i).
\end{align*}
Then,
$Y_{T-k}(x_{1:k}) = \Delta_{N_{k+1}} / p_{r_{k+1}}(N_{k+1})$, where $\Delta_{N_{k+1}}$ is defined in Algorithm \ref{alg.multi_stage_MLMC}.
Note that by the induction hypothesis, we have
\begin{equation*}
\EE_{\pi_{k+2:T}} \left[ Y_{T-(k+1)}(x_{1:k+1})\right] = U_{T-(k+1)}(x_{1:k+1}),
\end{equation*}
and 
\begin{equation*}
\EE_{\pi_{1:T}} \left[\left|Y_{T-{k+1}}(x_{1:k+1})\right|^{2+\delta_{k+1}}\right] < \left(\prod_{i=k+2}^{T-1}\widetilde{C}_i\right)\cdot L^{2+\delta}\left[1+ \EE\left[\|X_T\|^{2+\delta}\right]\right].
\end{equation*}
We first show that $Y_{T-k}(x_{1:k})$ is an unbiased estimator of $U_{T-k}(x_{1:k})$. 
\begin{align*}
    & \EE_{\pi_{k+1:T}}\left[Y_{T-k}(x_{1:k})\right] \\
    &= \sum_{n=0}^{\infty} \EE_{\pi_{k+1:T}} [\Delta_n] \\
    &=  \sum_{n=1}^{\infty} \left(\EE_{\pi_{k+1:T}}\left[\max\left\lbrace f\left(x_{k+1}\right), \frac{S_{2^n}}{2^n}\right\rbrace\right] - \EE_{\pi_{k+1:T}} \left[\max\left\lbrace f\left(x_{k+1}\right), \frac{S_{2^{n-1}}}{2^{n-1}}\right\rbrace\right] \right) + \\
    &\qquad \EE_{\pi_{k+1:T}} \left[\max\left\lbrace f\left(x_{k+1}\right), Y_{T-(k+1)}(x_{1:k+1})(1)\right\rbrace\right]\\
    &= \EE_{\pi_{k+1:T}} \left[\max\left\lbrace f\left(x_{k+1}\right), \lim_{n\rightarrow \infty}\frac{S_{2^n}}{2^n}\right\rbrace\right]\\
    &= \EE_{\pi_{k+1:T}} \left[\max\left\lbrace f\left(x_{k+1}\right), U_{T-(k+1)}(x_{1:k+1})]\right\rbrace\right] = U_{T-k}(x_{1:k}). 
\end{align*}
Next, we bound the expected value $\EE_{\pi_{1:T}} \left[\left|Y_{T-k}(x_{1:k})\right|^{2+\delta_k}\right]$. For simplicity, in the following proof, we use $\PP$ and $\EE$ as abbreviations of $\PP_{\pi_{1:T}}$ and $\EE_{\pi_{1:T}}$.

Following the same idea in the proof of Theorem \ref{thm.one_step_unbiased}, we introduce three events:
\begin{align*}
    &E_1' := \left\lbrace \left|U_{T-(k+1)}(x_{1:k+1}) - f(x_{k+1})\right| < \varepsilon\right\rbrace, \\
    &E_2' := \left\lbrace \left|S_{2^{n-1}}^O/2^{n-1} - U_{T-(k+1)}(x_{1:k+1})\right|\geq \varepsilon / 2\right\rbrace, \\
    &E_3' := \left\lbrace \left|S_{2^{n-1}}^O/2^{n-1} - S_{2^{n-1}}^E/2^{n-1}\right|\geq \varepsilon / 2\right\rbrace.
\end{align*}
We start with bounding the probability of event $E_1'$. Take $\varepsilon = \frac{1}{2^{n/4}}$ (the same as Theorem \ref{thm.one_step_unbiased}), by Assumption \ref{ass.conditional}, we have
\begin{equation*}
\PP(E_1') = \int \mathbbm{1}\left(\left\lbrace \left|U_{T-(k+1)}(x_{1:k+1}) - f(x_{k+1})\right| < \varepsilon\right\rbrace\right)\pi_{1:k+1}(dx_1,\cdots, dx_{k+1})\leq C\varepsilon.
\end{equation*}

Similar to the proof of Theorem \ref{thm.one_step_unbiased}, by conditioning on $x_{1:k}$, we can apply Lemma \ref{lemma.M-Z_inequality} to get
\begin{align*}
    \PP_{\pi_{k+1:T}}(E_2') &\leq \frac{1}{(\varepsilon/2)^{2+\delta_{k+1}}} \cdot \frac{C_{2+\delta_{k+1}}2^{1+\delta_{k+1}}\left[1 + \EE\left[\left|Y_{T-(k+1)}\right|^{2+\delta_{k+1}} \mid x_{1:k}\right]\right]}{2^{(n-1)(2+\delta_{k+1})/2}}\\
    &\leq C_{2+\delta_{k+1}}2^{2+3\delta_{k+1}/2}\left[1 + \EE\left[\left|Y_{T-(k+1)}(x_{1:k+1})\right|^{2+\delta_{k+1}} \mid x_{1:k}\right]\right]\varepsilon.
\end{align*}
Thus, we can bound the probability of event $E_2'$ under $\pi_{1:T}$ by
\begin{align*}
    \PP(E_2') &= \int \PP_{\pi_{k+1:T}}(E_2') \pi_{1:k}(dx_{1:k})\\
    &\leq C_{2+\delta_{k+1}}2^{2+3\delta_{k+1}/2}\left[1 + \int\EE\left[\left|Y_{T-(k+1)}(x_{1:k+1})\right|^{2+\delta_{k+1}} \mid x_{1:k}\right]\pi_{1:k}(dx_{1:k})\right]\varepsilon\\
    &= C_{2+\delta_{k+1}}2^{2+3\delta_{k+1}/2}\left[1 + \EE\left[\left|Y_{T-(k+1)}(x_{1:k+1})\right|^{2+\delta_{k+1}}\right]\right]\varepsilon.
\end{align*}
Similarly, 
\begin{equation*}
    \PP(E_3') \leq C_{2+\delta_{k+1}}2^{4+5\delta_{k+1}/2}\left[1 + \EE\left[\left|Y_{T-(k+1)}(x_{1:k+1})\right|^{2+\delta_{k+1}}\right]\right]\varepsilon. 
\end{equation*}
Moreover, 
\begin{align*}
    \left[\EE \left[ |\Delta_n|^{2+\delta_{k+1}} \right]\right]^{\frac{2+\delta_k}{2+\delta_{k+1}}}  &\leq \left(C_{2+\delta_{k+1}}\left[1+ \EE\left[\left|Y_{T-(k+1)}(x_{1:k+1})\right|^{2+\delta_{k+1}}\right]\right]\right)^{\frac{2+\delta_k}{2+\delta_{k+1}}}\cdot\frac{1}{(2^{n})^{\frac{2+\delta_{k}}{2}}}\\
    &\leq \left(C_{2+\delta_{k+1}}\left[1+ \EE\left[\left|Y_{T-(k+1)}(x_{1:k+1})\right|^{2+\delta_{k+1}}\right]\right]\right)^{\frac{2+\delta_k}{2+\delta_{k+1}}}\cdot\frac{1}{2^n}.
\end{align*}
Following the same technique in the proof of Theorem \ref{thm.one_step_unbiased}, there exists universal constant $C'>0$ independent of the underlying process and $T$, such that 
\begin{align*}
&\EE \left[|\Delta_n|^{2+\delta_k}\mathbbm{1}(E_1'\cup  E_2'\cup E_3')\right] \\
\leq\quad& C'\left(\left[1+ \EE\left[\left|Y_{T-(k+1)}(x_{1:k+1})\right|^{2+\delta_{k+1}}\right]\right]\right)^{\frac{2+\delta_k}{2+\delta_{k+1}} + \frac{9\delta_{k+1}}{20+10\delta_{k+1}}} \cdot \frac{1}{2^{\left(1 + \frac{9\delta_{k+1}}{80+40\delta_{k+1}}\right)n}}\\
\leq\quad& C'\left[1+ \EE\left[\left|Y_{T-(k+1)}(x_{1:k+1})\right|^{2+\delta_{k+1}}\right]\right]\frac{1}{2^{\left(1 + \frac{9\delta_{k+1}}{80+40\delta_{k+1}}\right)n}}.
\end{align*}
Here we have used the fact that $\frac{2+\delta_k}{2+\delta_{k+1}} + \frac{9\delta_{k+1}}{20+10\delta_{k+1}} = 1$. Noticing that 
\begin{equation*}
 \EE \left[|\Delta_n|^{2+\delta_k}\mathbbm{1}(E_1'^c\cap  E_2'^c\cap E_3'^c)\right] = 0,
\end{equation*}
we get
\begin{align*}
    &\EE \left[\left|Y_{T-k}(x_{1:k})\right|^{2+\delta_k}\right] \\
    \leq \quad & 2^{1+\delta_k} \sum_{n=0}^\infty \frac{\EE\left[|\Delta_n|^{2+\delta_k}\right]}{\left[p_{r_{k+1}}(n)\right]^{1+\delta_k}} \\
    = \quad & \frac{C'}{r_{k+1}} \left[1+ \EE\left[\left|Y_{T-(k+1)}(x_{1:k+1})\right|^{2+\delta_{k+1}}\right]\right] \sum_{n=0}^{\infty}\frac{1}{\left(2^{\frac{9\delta_{k+1}/(80+40\delta_{k+1}) - \delta_k}{2+\delta_k}}\right)^n} \\
    \leq \quad & \widetilde{C}_{k+1}\left(\prod_{i=k+2}^{T-1}\widetilde{C}_i\right) L^{2+\delta}\left[1+ \EE\left[\|X_T\|^{2+\delta}\right]\right] = \left(\prod_{i=k+1}^{T-1}\widetilde{C}_i\right) L^{2+\delta}\left[1+ \EE\left[\|X_T\|^{2+\delta}\right]\right],
\end{align*}
where in the last inequality we have applied the induction hypothesis $(c)$. Note that $9\delta_{k+1}/(80+40\delta_{k+1}) - \delta_k > 0$ when $\delta < 1/4$, we have 
\begin{align}
\nonumber\widetilde{C}_{k+1}:&= \frac{C'}{r_{k+1}}\sum_{n=0}^{\infty}\frac{1}{\left(2^{\frac{9\delta_{k+1}/(80+40\delta_{k+1}) - \delta_k}{2+\delta_k}}\right)^n}\\
\nonumber&= \frac{C'}{r_{k+1}}\cdot\frac{1}{2^{\frac{9\delta_{k+1}/(80+40\delta_{k+1}) - \delta_k}{2+\delta_k}} - 1}\\
\nonumber&\leq \frac{C'}{r_{k+1}}\cdot \frac{1}{\frac{9\delta_{k+1}/(80+40\delta_{k+1}) - \delta_k}{2+\delta_k}} \quad (\textrm{note that }2^{\alpha}\geq 1 + \alpha \textrm{ for }\alpha >0 )\\
\nonumber&= \frac{C'}{r_{k+1}}\cdot \frac{(2 + \delta_{k+1}/10)(80 + 40\delta_{k+1})}{9\delta_{k+1} - (8+4\delta_{k+1})\delta_{k+1}} \quad (\delta_{k} = \delta_{k+1} / 10)\\
\nonumber&\leq \frac{C'}{r_{k+1}}\cdot \frac{3 \cdot 90}{1-4\delta}\cdot \frac{1}{\delta_{k+1}} \quad (\delta_{k+1} < \delta < 1/4)\\
&= \frac{C'}{r_{k+1}}\cdot \frac{3}{\delta(1-4\delta)}\cdot 10^{T-k} \quad (\delta_{k+1} = \delta\cdot10^{k+2-T}) \label{bound_tilde_C}\\
\nonumber& < \infty
\end{align}
is a constant independent of underlying process.
Finally, since we have called Algorithm \ref{alg.multi_stage_MLMC} for $2^{N_{k+1}}$ times to construct $Y_{T-k}(x_{1:k})(i)$ $(1\leq i\leq 2^{N_{k+1}})$, the expected sampling complexity of computing $Y_{T-k}(x_{1:k})$ is 
\begin{equation*}
    \EE[2^{N_{k+1}}]\cdot \prod_{i=k+2}^{T-1}C_i = \prod_{i=k+1}^{T-1} C_i < \infty,
\end{equation*}
where
\begin{align}
C_{k+1}:= \EE[2^{N_{k+1}}] \lesssim \sum_{n=0}^\infty \frac{1}{\left(2^{\frac{9\delta_{k+1}/(80+40\delta_{k+1}) - \delta_k}{2+\delta_k}}\right)^n} \leq \frac{3}{\delta(1-4\delta)}\cdot 10^{T-k}. \label{bound_C}
\end{align}
As a result, the expected computational complexity is also finite. To sum up, $(a), (b)$ and $(c)$ are satisfied for $k$. Thus, the proof by induction is completed. In particular, together with \eqref{bound_tilde_C} and \eqref{bound_C}, there exist universal constant $D>0$ independent of the underlying process and $T$, such that the resulting estimator $Y_T$ in Algorithm \ref{alg.multi_stage_MLMC} satisfying:
\begin{enumerate}
    \item [(1)] $\EE[Y_T] = U_T$
    \item [(2)] Expected computational complexity is $\prod_{i=1}^T C_i < D\cdot 10^{T^2}$ (by \eqref{bound_C}).
    \item [(3)] The variance of $Y_T$ is bounded by
    $$
    \left(\prod\limits_{i=k+1}^{T-1}\widetilde{C}_i\right)\cdot L^{2+\delta}\left[1+ \EE\left[\|X_T\|^{2+\delta}\right]\right] < D\cdot 10^{T^2}\cdot L^{2+\delta}\left[1+ \EE\left[\|X_T\|^{2+\delta}\right]\right].
    $$
\end{enumerate}
\end{proof}

\bibliography{di}

\begin{thebibliography}{10}
\expandafter\ifx\csname url\endcsname\relax
  \def\url#1{\texttt{#1}}\fi
\expandafter\ifx\csname urlprefix\endcsname\relax\def\urlprefix{URL }\fi
\expandafter\ifx\csname href\endcsname\relax
  \def\href#1#2{#2} \def\path#1{#1}\fi

\bibitem{Shepp96}
L.~Shepp, A.~Shiryaev, A dual {Russian} option for selling short, Probability
  Theory and Mathematical Statistics (1996) 209--218.

\bibitem{Shepp97}
R.~W. Chen, B.~Rosenberg, L.~A. Shepp, A secretary problem with two decision
  makers, Journal of Applied Probability 34~(4) (1997) 1068--1074.

\bibitem{Shepp93a}
L.~Shepp, A.~N. Shiryaev, The {Russian} option: reduced regret, The Annals of
  Applied Probability 3~(3) (1993) 631--640.

\bibitem{Shepp93b}
L.~E. Dubins, L.~A. Shepp, A.~N. Shiryaev, Optimal stopping rules and maximal
  inequalities for {Bessel} processes, Theory of Probability \& Its
  Applications 38~(2) (1994) 226--261.

\bibitem{mcleish2011general}
D.~McLeish, A general method for debiasing a {Monte Carlo} estimator, Monte
  Carlo Methods and Applications 17~(4) (2011) 301--315.

\bibitem{glynn2014exact}
P.~W. Glynn, C.-h. Rhee, Exact estimation for {Markov} chain equilibrium
  expectations, Journal of Applied Probability 51~(A) (2014) 377--389.

\bibitem{rhee2015unbiased}
C.-h. Rhee, P.~W. Glynn, Unbiased estimation with square root convergence for
  {SDE} models, Operations Research 63~(5) (2015) 1026--1043.

\bibitem{Blanchet2015UnbiasedMC}
J.~H. Blanchet, P.~W. Glynn, Unbiased {Monte Carlo} for optimization and
  functions of expectations via multi-level randomization, 2015 Winter
  Simulation Conference (WSC) (2015) 3656--3667.

\bibitem{vihola2018unbiased}
M.~Vihola, Unbiased estimators and multilevel {Monte Carlo}, Operations
  Research 66~(2) (2018) 448--462.

\bibitem{Blanchet2019Unbiased}
J.~H. Blanchet, P.~W. Glynn, Y.~Pei, Unbiased multilevel {Monte Carlo}:
  Stochastic optimization, steady-state simulation, quantiles, and other
  applications, arXiv preprint arXiv:1904.09929.

\bibitem{biswas2019estimating}
N.~Biswas, P.~E. Jacob, P.~Vanetti, Estimating convergence of {Markov} chains
  with {L}-lag couplings, in: Advances in Neural Information Processing
  Systems, Vol.~32, 2019.

\bibitem{heng2019unbiased}
J.~Heng, P.~E. Jacob, Unbiased {Hamiltonian Monte Carlo} with couplings,
  Biometrika 106~(2) (2019) 287--302.

\bibitem{jacob2020unbiased}
P.~E. Jacob, J.~O’Leary, Y.~F. Atchad{\'e}, Unbiased {Markov} chain {Monte
  Carlo} methods with couplings, Journal of the Royal Statistical Society:
  Series B (Statistical Methodology) 82~(3) (2020) 543--600.

\bibitem{middleton2020unbiased}
L.~Middleton, G.~Deligiannidis, A.~Doucet, P.~E. Jacob, Unbiased {Markov} chain
  {Monte Carlo} for intractable target distributions, Electronic Journal of
  Statistics 14~(2) (2020) 2842--2891.

\bibitem{wang2021maximal}
G.~Wang, J.~O’Leary, P.~Jacob, Maximal couplings of the {Metropolis-Hastings
  Algorithm}, in: International Conference on Artificial Intelligence and
  Statistics, PMLR, 2021, pp. 1225--1233.

\bibitem{wang2022unbiased}
G.~Wang, T.~Wang, Unbiased {Multilevel Monte Carlo methods for intractable
  distributions: MLMC meets MCMC}, arXiv preprint arXiv:2204.04808.

\bibitem{heinrich2001multilevel}
S.~Heinrich, Multilevel {Monte Carlo} methods, in: International Conference on
  Large-Scale Scientific Computing, Springer, 2001, pp. 58--67.

\bibitem{giles2008multilevel}
M.~B. Giles, Multilevel {Monte Carlo} path simulation, Operations Research
  56~(3) (2008) 607--617.

\bibitem{giles2015multilevel}
M.~B. Giles, Multilevel {Monte Carlo} methods., Acta Numer. 24 (2015) 259--328.

\bibitem{giles2014antithetic}
M.~B. Giles, L.~Szpruch, Antithetic multilevel {Monte Carlo} estimation for
  multi-dimensional {SDEs} without {L}{\'e}vy area simulation, Annals of
  Applied Probability 24~(4) (2014) 1585--1620.

\bibitem{giles2019decision}
M.~B. Giles, T.~Goda, Decision-making under uncertainty: using {MLMC} for
  efficient estimation of {EVPPI}, Statistics and Computing 29~(4) (2019)
  739--751.

\bibitem{broadie1997pricing}
M.~Broadie, P.~Glasserman, Pricing {A}merican-style securities using
  simulation, Journal of economic dynamics and control 21~(8-9) (1997)
  1323--1352.

\bibitem{longstaff2001valuing}
F.~A. Longstaff, E.~S. Schwartz, Valuing american options by simulation: a
  simple least-squares approach, The review of financial studies 14~(1) (2001)
  113--147.

\bibitem{tsitsiklis2001regression}
J.~N. Tsitsiklis, B.~Van~Roy, Regression methods for pricing complex
  american-style options, IEEE Transactions on Neural Networks 12~(4) (2001)
  694--703.

\bibitem{egloff2005monte}
D.~Egloff, {Monte Carlo} algorithms for optimal stopping and statistical
  learning, The Annals of Applied Probability 15~(2) (2005) 1396--1432.

\bibitem{giles2018mlmc}
M.~B. Giles, {MLMC} for nested expectations, in: Contemporary Computational
  Mathematics-A Celebration of the 80th Birthday of Ian Sloan, Springer, 2018,
  pp. 425--442.

\bibitem{hironaka2020multilevel}
T.~Hironaka, M.~B. Giles, T.~Goda, H.~Thom, {Multilevel Monte Carlo} estimation
  of the expected value of sample information, SIAM/ASA Journal on Uncertainty
  Quantification 8~(3) (2020) 1236--1259.

\bibitem{hong2009estimating}
L.~J. Hong, S.~Juneja, Estimating the mean of a non-linear function of
  conditional expectation, in: Proceedings of the 2009 Winter Simulation
  Conference (WSC), IEEE, 2009, pp. 1223--1236.

\bibitem{belomestny2015multilevel}
D.~Belomestny, M.~Ladkau, J.~Schoenmakers, Multilevel simulation based policy
  iteration for optimal stopping--convergence and complexity, SIAM/ASA Journal
  on Uncertainty Quantification 3~(1) (2015) 460--483.

\bibitem{belomestny2012tight}
D.~Belomestny, M.~Ladkau, J.~Schoenmakers, Tight bounds for {A}merican options
  via multilevel {Monte Carlo}, in: Proceedings of the 2012 Winter Simulation
  Conference (WSC), IEEE, 2012, pp. 1--8.

\bibitem{belomestny2013multilevel}
D.~Belomestny, J.~Schoenmakers, F.~Dickmann, Multilevel dual approach for
  pricing american style derivatives, Finance and Stochastics 17~(4) (2013)
  717--742.

\bibitem{dickmann2014multilevel}
F.~Dickmann, Multilevel approach for bermudan option pricing, Ph.D. thesis,
  Duisburg, Essen, 2014 (2014).

\bibitem{glasserman2004number}
P.~Glasserman, B.~Yu, Number of paths versus number of basis functions in
  {A}merican option pricing, The Annals of Applied Probability 14~(4) (2004)
  2090--2119.

\bibitem{zanger2013quantitative}
D.~Z. Zanger, {Quantitative error estimates for a least-squares Monte Carlo
  algorithm for American option pricing}, Finance and Stochastics 17~(3) (2013)
  503--534.

\bibitem{asi2021stochastic}
H.~Asi, Y.~Carmon, A.~Jambulapati, Y.~Jin, A.~Sidford, Stochastic bias-reduced
  gradient methods, Advances in Neural Information Processing Systems 34.

\bibitem{efron1994introduction}
B.~Efron, R.~J. Tibshirani, An Introduction to the Bootstrap, CRC press, 1994.

\bibitem{diciccio1996bootstrap}
T.~J. DiCiccio, B.~Efron, Bootstrap confidence intervals, Statistical science
  11~(3) (1996) 189--228.

\bibitem{herrera2021optimal}
C.~Herrera, F.~Krach, P.~Ruyssen, J.~Teichmann, Optimal stopping via randomized
  neural networks, arXiv preprint arXiv:2104.13669.

\bibitem{becker2019solving}
S.~Becker, P.~Cheridito, A.~Jentzen, T.~Welti, Solving high-dimensional optimal
  stopping problems using deep learning, arXiv preprint arXiv:1908.01602.

\bibitem{bender2006policy}
C.~Bender, A.~Kolodko, J.~Schoenmakers, Policy iteration for {American}
  options: Overview, Monte Carlo Methods and Applications.

\bibitem{jain2012pricing}
S.~Jain, C.~W. Oosterlee, Pricing high-dimensional {Bermudan} options using the
  stochastic grid method, International Journal of Computer Mathematics 89~(9)
  (2012) 1186--1211.

\bibitem{marcinkiewicz1937quelques}
J.~Marcinkiewicz, A.~Zygmund, Quelques th{\'e}oremes sur les fonctions
  ind{\'e}pendantes, Fund. Math 29 (1937) 60--90.

\end{thebibliography}

\clearpage

\end{document}